%% file: main.tex
\documentclass[runningheads]{llncs}

\input{preamble/extrapackages}

\input{preamble/macrosetup}
\begin{document}

\title{A Formal Framework for Predicting Distributed System Performance under Faults\\
(Extended Version)
}
\titlerunning{Predicting Distributed System Performance under Faults}

\author
{Ziwei Zhou\inst{1}\thanks{Joint first authors with equal contribution.}
\and
Si Liu\inst{2}\repeatthanks
\and
Zhou Zhou\inst{1}
\and 
Peixin Wang\inst{1}
\and 
Min Zhang\inst{1}
}

\authorrunning{Z. Zhou et al.}

\institute{
East China Normal University, China
\and
%ETH Zurich, Switzerland
Texas A\&M University, USA
}

\maketitle              

\input{sections/abs}
\input{sections/intro}
\input{sections/prelim}

\input{sections/faults}
\input{sections/impl}

\input{sections/case-study}

\input{sections/related}
\input{sections/concl}

\section*{Acknowledgments}
We thank the anonymous reviewers for their valuable feedback. 
Min Zhang was funded by the NSFC Projects (No. 62372176 and No. 92582108) and Shanghai Trusted Industry Internet Software Collaborative Innovation Center. 
Peixin Wang was sponsored by CCF-Huawei Populus Grove Fund.
Si Liu and Min Zhang 
are the corresponding authors.

%\newpage
\bibliographystyle{splncs04}
\bibliography{ref}

\newpage
\appendix
\input{sections/appendix}

% \input{sections/f-trans}

\end{document}

%% file: preamble/extrapackages.tex
\usepackage{graphicx}
\usepackage{colortbl}
\usepackage{tabularx}
\usepackage{alltt}
\usepackage{soul}
\usepackage{wrapfig}
\usepackage{subcaption}
\usepackage{framed}
\usepackage{quoting}
\usepackage[ASCII]{maude}
\usepackage{xspace}
\usepackage{enumitem}
\usepackage{booktabs}
\usepackage[T1]{fontenc}
\usepackage{graphicx}
\usepackage{pgfplots}
\pgfplotsset{compat=1.18} 
\usepackage{filecontents}
\usepackage{url}
\usepackage{amsmath,amssymb}
\usepackage{float}

\usepackage{listings}
\lstdefinelanguage{Maude}{
    morekeywords={mod, is, endm, then, else, class, including, sort, sorts, subsort, subsorts, var, vars, op, ops, eq, ceq, rl, crl, owise, assoc, comm, id, ctor, if, fi, and, or, inc, nonexec},
    sensitive=true, % keywords are not case-sensitive
    morecomment=[l]{***} % l is for line comment
}

\lstdefinestyle{maudestyle}{
    commentstyle={\it\color{codegreen}},
    keywordstyle={\color{blue}},
    numberstyle=\tiny\color{gray},
    basicstyle=\ttfamily\scriptsize,
    breakatwhitespace=false,         
    breaklines=true,                 
    captionpos=b,                    
    keepspaces=true,                 
    numbers=left, % none
    numbersep=5pt,                  
    showspaces=false,                
    showstringspaces=false,
    showtabs=false,                  
    tabsize=2,
    moredelim=**[is][\color{red}]{@}{@}
}

%% file: preamble/macrosetup.tex
\newcommand{\zzw}[1]{{\color{orange}{[Zzw: #1]}}}
\newcommand{\zhang}[1]{{\color{blue}{[Min: #1]}}}
\newcommand{\todo}[1]{{\color{red}{[#1]}}}
\newcommand{\tocite}[1]{{\color{red}{[CITE]}}}

\newcommand{\inlsec}[1]{\smallskip\noindent\textbf{#1.}}
\newcommand{\inlsecit}[1]{\smallskip\noindent\textit{#1.}}
\newcommand{\ourtool}{\textsc{PerF}\xspace}

\newcommand{\circled}[1]{{\textcircled{\scriptsize #1}}}

\definecolor{shadecolor}{gray}{0.95}

\newcommand{\repeatthanks}{\textsuperscript{\thefootnote}}

\definecolor{codegreen}{HTML}{2E8B57}
\definecolor{mycolor3}{HTML}{D18CB8}
\definecolor{mycolor15}{HTML}{DADBEB}

%% file: sections/abs.tex
\begin{abstract}
Today's distributed systems  operate in complex environments that inevitably involve faults and even adversarial behaviors.
Predicting their performance under such  environments  directly from formal designs remains a long-standing challenge. 
We present the first formal framework that systematically enables performance prediction of distributed systems across diverse faulty scenarios. 
Our framework features a fault injector together with a wide range of faults, reusable 
as a library, and 
model compositions that integrate the system and the fault injector 
into a unified 
model suitable for statistical analysis of  performance properties such as throughput and latency. 
We formalize the framework in Maude and implement it as an automated tool, \ourtool.
Applied 
to representative distributed systems,  \ourtool accurately predicts system performance under varying fault settings, with estimations from formal designs consistent with  evaluations on real deployments.

\end{abstract}

%% file: sections/intro.tex
\section{Introduction} \label{sec:intro}

Distributed systems, such as cloud databases and blockchains, form the backbone of today's digital infrastructure. 
Formal methods have proven highly effective in verifying the correctness of their designs~\cite{DBLP:books/aw/Lamport2002,DBLP:conf/fase/Meseguer25}.
%Especially at an early design stage, formal analysis can not only uncover subtle flaws that are otherwise hard to detect,but also enable systematic exploration of the design space~\cite{DBLP:books/aw/Lamport2002,DBLP:conf/fase/Meseguer25}. 
\emph{Performance, however, is equally critical.}
The need to predict a system's performance (e.g., throughput and latency)
in realistic distributed environments, ideally before its implementation and deployment, has long been recognized by both academia~\cite{alur-henzinger-vardi,bobba2018survivability} and industry~\cite{DBLP:series/isc/Chuat22,Davis2025TLAStat,Newcombe2015}. 
Such environments inevitably involve faults, including network partitions, message delays, and even Byzantine behaviors~\cite{DBLP:books/mk/Lynch96,DBLP:books/daglib/0019513}.
In particular, Amazon Web Services has been seeking, 
for over a decade,  a feasible way to model distributed systems and predict their performance degradation~\cite{Brooker2024TLAConf,Newcombe2015}. 

\inlsec{State-of-the-Art}
Despite  attempts to address this long-standing challenge, it remains largely unresolved.
First, most quantitative formal analyses of distributed systems assume a fault-free environment~\cite{bobba2018survivability,DBLP:conf/tase/LiangL21,liu-tosem,DBLP:journals/lites/LiuGRNGM17,DBLP:journals/pacmpl/LiuMOZB22,icfem17,DBLP:journals/fac/LiuOWGM19,DBLP:conf/wrla/LiuOWM18,qmaude,Vanlightly2022TLCSim}.
Yet
ignoring faults can introduce a significant gap between model-based predictions and observed system performance.
For instance,  two distributed  transaction protocols that appear competitive under a recent fault-free analysis~\cite{DBLP:journals/pacmpl/LiuMOZB22} diverge sharply once message loss is introduced (see Section~\ref{sec:exp}). 
Second, approaches that incorporate faults  are often \textit{ad hoc}~\cite{DBLP:conf/ccs/AlTurkiKKNST18,DBLP:conf/fm/AlturkiR19,DBLP:conf/fase/EckhardtMAMW12,liu-sigcomm,n-tube}, manually intertwining fault behaviors with system semantics.
This hinders reuse,
e.g., 
applying the same fault to a different protocol requires rebuilding the entire model from scratch.  
Third, all of these approaches consider only a few isolated fault types, whereas  systems in practice often experience a wide range of faults that may also coexist, e.g., network partitions coinciding with malicious attacks.
Therefore, existing analyses capture only a narrow fragment of the overall system performance space.

All of these limitations call for a systematic formal framework that
(i) provides \emph{comprehensive fault coverage} across diverse %benign (e.g., message loss and network partitions) and Byzantine (e.g., tampering and equivocation)
faults encountered in practice;
(ii) ensures \emph{modularity} by
decoupling  individual fault  modeling from system modeling 
to support
composition, reuse, and extensibility;
and 
%(iii) supports \emph{extensibility} through well-defined interfaces for integrating new fault types; and
(iii) enables \emph{automatic} generation of 
fault-injected system models
and quantitative analysis of performance properties, e.g.,  using statistical verifiers.

\inlsec{Our Solution}
We present the \emph{first} formal framework  that  realizes these goals.
At its core lies \emph{a library of faults} covering a wide range, including both benign faults (e.g., message loss and network partitions) and Byzantine faults (e.g., tampering and equivocation).
We model each fault as a message-passing \emph{actor system}~\cite{actors}, a design choice that is motivated  by 
 %both an observation and an 
our two insights.
First, virtually all distributed systems can be  specified as actor systems, where nodes, such as clients and servers, 
communicate via asynchronous message exchanges.
%~\cite{DBLP:conf/fase/Meseguer25,DBLP:journals/pacmpl/LiuMOZB22,DBLP:conf/pldi/DesaiGJQRZ13,DBLP:conf/ispdc/Agha14}.
Modeling faults as actors allows their effects on a system to be naturally captured by message-triggered interactions between the fault injector and system actors. 
Second, most practically relevant faults in distributed systems 
%share a common essence: they 
can  be commonly viewed as manipulations of messages between actors.
For instance, 
a node crash corresponds to dropping all of the messages sent to a specific actor;
a network partition can be seen as discarding all of the messages between two %disjoint 
actor sets, while preserving communication within each set.
This unified view not only simplifies the modeling of individual faults
but also allows different fault behaviors to be seamlessly integrated into the  framework for  performance analysis.

How do faults, then, interact with the  distributed system under analysis?
We realize this interaction through  \emph{model composition}.
Specifically, our framework takes as input a  system model and
a fault injector incorporating 
one or more faults, and constructs an integrated model that preserves the original system semantics while enriching it with relevant fault  behaviors.
To account for realistic environments in which multiple faults may arise simultaneously (e.g., a network partition alongside equivocation), we introduce \emph{fault-behavior priority levels} for fine-grained control over their interactions.
This avoids semantic inconsistencies such as delivering an equivocated message across partitions. 
Moreover, we design the model composition to also preserve the \emph{absence of nondeterminism} property~\cite{pmaude}, %\footnote{Intuitively, absence of nondeterminism means, for any given system state, the subsequent transition is uniquely determined, without any nondeterministic branching.}
thereby supporting \emph{statistical model checking} analysis, a formal approach that offers stronger guarantees  for quantitative results than pure simulation and scales  to large distributed systems~\cite{agha18,Sen05}. 
To facilitate exploring performance space,
 our framework also allows
 % provides
%(i) configurable fault injection, allowing
users to configure which faults to inject and when to inject them and
%(ii) 
provides a monitor 
%(extending the  mechanism introduced in~\cite{DBLP:journals/pacmpl/LiuMOZB22})
that records events of interest at  runtime, based on which users can readily define performance properties. 
All of these features are integrated into our automated tool \ourtool.

Our approach is formalism-independent. 
In this work, we instantiate it in the Maude specification language and  tool~\cite{maude-book} that has been successfully applied to a wide range of
 distributed systems~\cite{DBLP:journals/pacmpl/LiuMOZB22,maude2025applications}.
 Specifically, both the  distributed system under analysis and the faults are formalized as \emph{probabilistic rewrite theories}~\cite{pmaude}.
Since performance properties, such as throughput and latency, intrinsically involve time, probabilities are primarily used to model message delays, which are sampled from certain probability distributions.
The model compositions are then realized as compositions of rewrite theories.

\inlsec{Contributions} 
Overall, this work makes the following contributions:

\begin{itemize}
    \item At the conceptual level, we address the long-standing challenge of predicting the performance of  distributed systems under realistic faulty environments
    directly from their formal designs.

\item At the technical level, we develop an actor-based formal framework in Maude, featuring a reusable  library of practically relevant faults. %\todo{(Section~\ref{sec:})}.
We   design a model composition that integrates the system and the fault injector managing these fault behaviors
into a unified %, correct-by-construction
model amenable to quantitative analysis.

\item At the practical level, we implement our framework as an automated tool,  called \ourtool,  that supports configurable fault injection and end-to-end  performance prediction through statistical model checking.
    
\end{itemize}

We %showcase %a series of 
%6 case studies  applying 
apply \ourtool to six representative distributed systems of different kinds.
Experimental results demonstrate
%In particular, we show 
that it can accurately predict system performance under various types and combinations of faults, with model-based estimations  aligning closely with empirical evaluations on real deployments
(Section~\ref{sec:exp}).

%\begin{comment}
This work also complements  long-established, \textit{post hoc} fault-injection  efforts on  distributed system deployments 
%~\cite{jepsen,twins,ChaosMonkey,fis,cofi,CrashFuzz,chronos,Rainmaker,10.1145/3576915.3623071,Ozkan-bft},
%which are, however, posterior and focused on correctness issues.
%thereby remaining reactive to potential performance issues.
%A detailed discussion is provided in Section~\ref{sec:discuss}.
by offering \emph{proactive} insight into system dynamics under faults, already at the design stage. See Section~\ref{sec:discuss} for a discussion.
%\end{comment}

%% file: sections/prelim.tex
\section{Preliminaries}
\label{sec:prelim}
%In this section, we present our running example, a simple voting protocol specified in Maude.
%We begin by recalling the necessary preliminaries.
%which we use to introduce the necessary preliminaries \todo{actor paradigm, probabilistic rewrite theories, Maude} (and to illustrate our framework in later sections).

\textbf{Maude.} As an executable formal specification language and analysis tool, Maude \cite{maude-book} has been successfully applied to a broad spectrum  of distributed systems.
A Maude module specifies a 
 \emph{rewrite theory}  \cite{DBLP:journals/tcs/Meseguer92}
 $\mathcal{R}=(\Sigma,E,L,R)$, where:
 
\begin{itemize}
\item $\Sigma$ is an algebraic \emph{signature}, i.e., a set of 
\emph{sorts}, \emph{subsorts}, and \emph{function symbols}; 
\item $(\Sigma, E)$ is an \emph{order-sorted equational
logic theory} \cite{DBLP:journals/tcs/GoguenM92}
 specifying the system's data types,  where  $E$ is a set of
(possibly conditional) equations and axioms; %,  and $B$ a set of
                                %equational axioms such as
                                %associativity, commutativity, and
                                %identity, so that equational
                                %deduction is performed \emph{modulo}
                                %the axioms $B$.  
\item $L$ is a set of rule \emph{labels}; and
\item $R$ is a collection of {\em labeled conditional rewrite rules\/}
  \( [l]:\,t\longrightarrow t'\,\mbox{ \textcolor{blue}{if} } \mathit{cond}\), with
  $t, t'$ $\Sigma$-terms and $l \in L$. These rules specify the system's local transitions.
\end{itemize}
We summarize the fragment of Maude syntax used in this paper and refer the reader to~\cite{maude-book} for details. 
Operators are declared  by
\texttt{\textcolor{blue}{op}} $f$ \texttt{:} $s_1 \ldots s_n$ \texttt{->}
$s$ and may have user-defined syntax, 
where `\verb+_+'  denotes argument positions, as in \verb@_+_@.
(Unconditional and conditional) equations and rewrite rules are introduced with the keywords
\textbf{\texttt{\textcolor{blue}{eq}}} and \texttt{\textcolor{blue}{ceq}}, and
\textbf{\texttt{\textcolor{blue}{rl}}} and \texttt{\textcolor{blue}{crl}}, respectively.
%for conditional rules.  
% An equation $f(t_1, \ldots, t_n) = t$ with the \texttt{owise} (for
% ``otherwise'') attribute  can be applied to a term $f(\ldots)$ 
% only if no other equation with lefthand side $f(u_1, \ldots, u_n)$ 
% can be applied. 
% Equations and rules  with the \texttt{nonexec} attribute 
% are ignored by the Maude rewrite engine, which  can for example be used
% at the metalevel for controlled execution.
%Variables %in such statements
%are declared with the keywords \texttt{\textcolor{blue}{var}} and
%\texttt{\textcolor{blue}{vars}}.  % or can be declared on-the-fly, having the
                          % form
                          % \texttt{\(\mathit{var}\):\(\mathit{sort}\)}. 
Modules can be imported as submodules,
e.g., by using the keyword \texttt{\textcolor{blue}{inc}}.
%  A  module in Maude can
% be imported as a submodule of another % in three different modes,
% protecting, extending, or including, expressed by the keywords
% \texttt{pr}, \texttt{ex}, or \texttt{inc}, respectively.  
% using keywords such as  \texttt{inc}.
Comments %in Maude
start with  `\texttt{\textcolor{codegreen}{***}}'.

\inlsec{Actors}
We follow Agha's message-passing \emph{actor} paradigm~\cite{actors} to specify systems. 
In Maude,
a  declaration \texttt{\textcolor{blue}{\,\,class} \(C\) |
  \(\mathit{att}_1\)\,\,:\,\,\(\mathit{s}_1\), \dots ,
  \(\mathit{att}_n\)\,\,:\,\,\(\mathit{s}_n\)\,\,} 
declares an actor (or object) class $C$  with attributes $att_1$ to $att_n$ of
sorts $s_1$ to $s_n$.  
An instance of class $C$  is  a term $\,\,\texttt{<}\, o : C\,\texttt{|}\,\mathit{att}_1: \mathit{val}_1, \dots , \mathit{att}_n: \mathit{val}_n\,\texttt{>}\,\,$,
where $o$ (of sort \texttt{Oid})  is the object's identifier,
and  $val_1$ to  $val_n$ are the values of 
the attributes.
Messages exchanged between objects are terms of sort \texttt{Msg}. 
A system state is modeled as a term of
 sort \texttt{Config} and 
is structured as a \emph{multiset} of objects and messages,
formed via the (juxtaposition) multiset union operator
 \texttt{\_\_}.
 
The dynamic behavior of a system is axiomatized by specifying each of its transition patterns through a rewrite rule.
For instance, the  rule  labeled \texttt{[reply]}

\begin{lstlisting}[language=maude,numbers=none]
rl [reply] :  (read(KEY) from O' to O)    < O : Server | database: DB >
          =>  < O : Server | >    (<KEY, DB[KEY]> from O to O') .
\end{lstlisting}

\noindent  defines a family of transitions in which a message \texttt{read(KEY) from O' to O} sent by the  client object \texttt{O'} is consumed
by the  server object \texttt{O}.
The server then looks up its   database and replies with the corresponding key-value pair via the message \texttt{<KEY, DB[KEY]> from O to O'}.
Note that 
attributes whose values do not change, such as \texttt{database}, can be omitted in the right-hand side of a
rule.
We provide 
a Maude specification of the Two-Phase Commit  protocol 
 in
 Appendix~\ref{appendix:tool}, 
 %\cite[Appendix~\ref{appendix:tool}]{tech-rpt},
 which we use to illustrate our framework and tool.

\inlsec{Probabilistic Rewrite Theories}
Extending the original rewrite theories, \emph{probabilistic rewrite theories}~\cite{pmaude} can specify a broader 
class of systems with probabilistic behavior.
%including object-based probabilistic real-time systems. %, which are the focus of this work.
In our framework, probabilities arise from message  delays sampled from probability distributions \emph{and} from probabilistic choices during fault injection, e.g., the likelihood of dropping a message.
%The dynamic behavior of a system is %axiomatized by specifying each of its transition patterns 
%through a probabilistic rewrite rule of the form\\
Probabilistic behavior can be modeled with rules of the form\\
\smallskip
\quad\quad $[l] : t(\overrightarrow{x}) \longrightarrow
t'(\overrightarrow{x},\overrightarrow{y})\;\; \mbox{\textcolor{blue}{if}} \;
cond(\overrightarrow{x})\;\;\mathit{with}\;\,\mathit{probability}\;\;\overrightarrow{y} :=
\pi(\overrightarrow{x})$\\
\smallskip
\noindent where  the
 term $t'$ has new variables
$\overrightarrow{y}$ disjoint from the variables $\overrightarrow{x}$
in  $t$. 
The probabilistic nature of the rule stems from the probability
distribution $\pi(\overrightarrow{x})$,
which depends on the matching instance of $\overrightarrow{x}$, and governs the probabilistic choice of
the instance of $\overrightarrow{y}$
in the term $t'(\overrightarrow{x},\overrightarrow{y})$ according to $\pi(\overrightarrow{x})$.

\inlsec{Statistical Model Checking (SMC)}
This 
is a formal method for analyzing probabilistic systems against temporal
logic properties,
which scales  to large distributed systems~\cite{agha18,Sen05}. 
%Unlike  pure simulation,  
SMC verifies whether a property,  expressed in a stochastic temporal logic like QuaTEx~\cite{pmaude},
holds with a user-specified statistical confidence by performing Monte Carlo simulations of the system model. 
The expected value of the property 
%such as a QuaTEx expression,
is iteratively evaluated with respect to two parameters \texttt{$\alpha$} (confidence level)  and
\texttt{$\delta$} (error margin)  until  a value $\bar{v}$ is obtained such that  
with $(1-\alpha)$  statistical confidence, 
the expected value lies in the interval
$[\bar{v}-\frac{\delta}{2}, \bar{v}+\frac{\delta}{2}]$.

A Maude model is suitable for SMC analysis only if it satisfies the \emph{absence of nondeterminism} (AND) property~\cite{pmaude}.
Intuitively, this property requires that, for any given system state, the subsequent transition is uniquely determined, without any nondeterministic branching.
We establish that the model composition, as well as the model transformation (Section~\ref{sec:tool}), satisfies this guarantee.
Our framework then integrates the PVeStA tool~\cite{pvesta} for end-to-end statistical model checking of system performance properties
expressed in QuaTEx formulas.
See
Appendix~\ref{appendix:tool}
%\cite[Appendix~\ref{appendix:tool}]{tech-rpt} 
for an example QuaTEx formula for average latency.

%% file: sections/faults.tex
\section{Framework Overview}
\label{sec:overview}

%In this section, we introduce the design of our fault injection framework, which provides fine-grained fault injection capabilities for actor-based distributed system models.
%The framework currently supports seven representative types of faults: message loss, message duplication, network partition, node crash, message delay, equivocation, and message tampering.
%We first present the overall workflow, explaining how these faults are injected and propagated through the actor-based system model, and then describe the core components that implement these mechanisms.

Our formal framework composes a distributed system model with a fault injector that includes a library of fault behaviors. 
This yields an integrated model in which the system interacts with the faults through message passing.

\inlsec{Fault Library}
Seven representative faults are currently included in the fault library:
\emph{message loss}, \emph{duplication}, \emph{delay}, \emph{tampering}, \emph{equivocation}, \emph{node crash}, and \emph{network partition}. 
These cover both benign and Byzantine fault behaviors commonly
observed in realistic distributed environments and  widely exercised in deployment-level fault-injection analyses  in industrial practice~\cite{fis,twins,jepsen,ChaosMonkey}.

Each fault is modeled as an actor,  referred to as a \emph{fault handler} (see below and Section~\ref{subsec:handler}).
This design choice is motivated by our two insights. 
First,  
modeling faults as actors allows their effects to be captured through message-triggered interactions with the system, whose nodes themselves communicate via asynchronous message passing.
Second, all of these faults can be viewed as manipulations of messages exchanged between actors. 
For example, message loss, duplication, delay, tampering, and equivocation operate on individual messages: loss drops a message between two nodes; 
duplication creates a copy; 
delay injects an abnormal latency (due to network congestion or adversarial interference); 
tampering modifies the message content; 
 equivocation sends different message copies to different recipients. 
Other faults operate on \emph{sets} of messages. 
Crashing a node can be represented as dropping all of the messages destined for it; a network partition corresponds to  dropping all of the messages between two disjoint node sets, while preserving communication within each set.

This unified view not only simplifies the modeling of individual faults but also provides a common ``interface'' for extending the library with new faults.

\inlsec{Fault Injection}
Injecting faults  is carried out through the cooperation of three components within the fault injector: the 
\emph{scheduler}, the \emph{fault controller}, and the \emph{fault handlers}.
%\footnote{In the fault-free setting, the scheduler alone determines the order in which messages are delivered to nodes in the system~\cite{pmaude,DBLP:journals/pacmpl/LiuMOZB22}.}
All interactions among these components occur internally and are thus transparent to the system.
In other words, the system   continues to evolve according to its original semantics; the injector merely intercepts and manipulates in-transit messages without altering the system's transition rules.

\begin{figure}[t]
  \begin{center}
    \includegraphics[width=.9\textwidth]{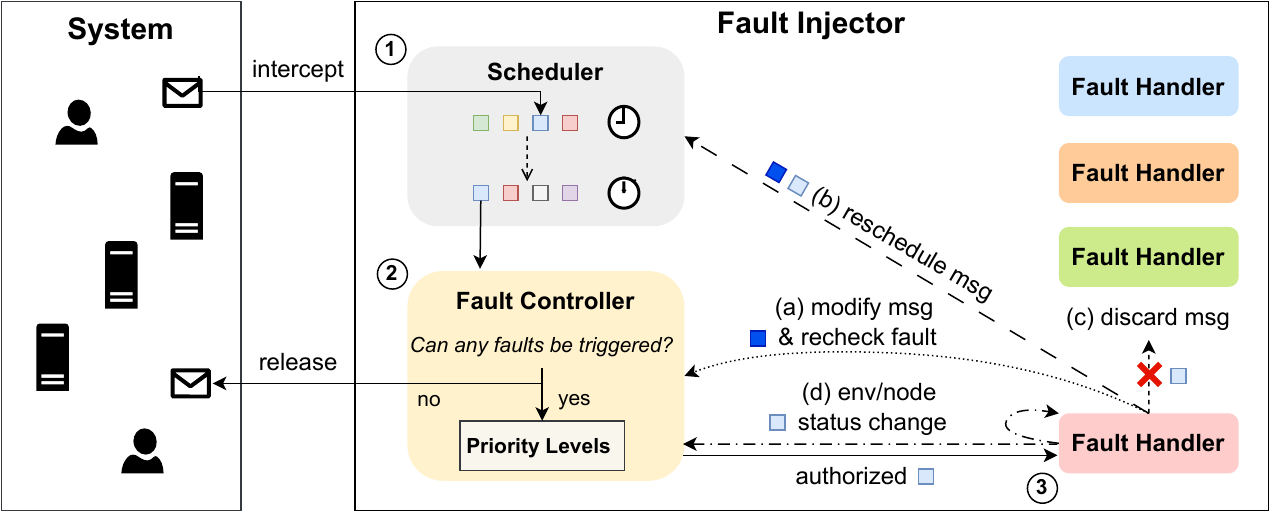}
  \end{center}
  \captionsetup{skip=0pt}
  \caption{The workflow of our fault-injection framework.
Arrows of different types in (a)--(d) represent the possible outcomes after a fault handler executes its designated operation to the message 
(Step \circled{3}).
  }
  \label{fig:overview}

\end{figure}

The overall workflow of our framework is as follows (also illustrated in Fig.~\ref{fig:overview}):

%\begin{description}
%    \item[\circled{1}]  \textbf{Message Interception and Scheduling.} 
\inlsec{\circled{1} Message Interception and Scheduling}
Fault injection begins when the scheduler intercepts an outgoing message 
\(\mathit{msg}\) 
from a sender node in the system,
e.g., the server's reply in the example of Section~\ref{sec:prelim}.
It then augments the message into the form \texttt{[\(\mathit{gt+md}\),\(\mathit{msg}\),\(\mathit{l}\)]},
     where
    % \texttt{GT + MD} 
    \(\mathit{gt+md}\)
    denotes the message's
 expected arrival time, with \(\mathit{gt}\) the current global time maintained by the scheduler and \(\mathit{md}\) a %(normal) 
 message delay sampled from a predefined  probability distribution 
 (e.g., lognormal~\cite{Benson10}),
    and 
\(\mathit{l}\) is the  label of the rewrite rule that produced the message.\footnote{More precisely, a preprocessing step annotates each outgoing message with the corresponding rule label prior to interception; see Section~\ref{sec:tool}.}
%Section~\ref{xx} details how these labels are attached  during model composition.
%The rule label is needed because many fault conditions depend on the message's origin (e.g., whether it was produced by a particular rule).

%\item[\circled{2}] \textbf{Message Scheduling.}
The scheduler maintains an ordered list of all in-transit messages, sorted by their expected arrival times.
When a message reaches the head of the list, the global clock is advanced to its expected arrival time. 
Rather than delivering the message directly to the destination node, the scheduler hands it over to the fault controller 
%as an \texttt{envelope([GT + MD, MSG, L])} 
for potential fault injection.
See
Section~\ref{subsec:scheduler} for details.

%\item[\circled{2}] 
%\textbf{Fault Control.}
\inlsec{\circled{2} Fault Control}
Upon receiving the message, the fault controller determines if any fault conditions are triggered based on  its metadata (e.g., its destination, content, and source rule label).
If a condition holds, the controller authorizes the corresponding fault behavior \(\mathit{fbhv}\) 
and passes the message to the appropriate  handler 
via 
\texttt{auth([\(\mathit{gt+md}\),\(\mathit{msg}\),\(\mathit{l}\)], \(\mathit{fbhv}\))}.
For instance, under a network partition,  \(\mathit{fbhv}\) may be \texttt{part-drop} that discards the message sent across  partitions. 

If multiple conditions hold simultaneously, the controller selects which fault behavior to authorize first according to  a predefined \emph{fault priority} (Section~\ref{subsec:controller}). 
Note that, in our framework, fault control is designed to be \emph{fine-grained}: 
a single fault  type may encompass multiple distinct fault behaviors.
For example, network partition also includes behaviors such as \texttt{part-msg}, where partitions are triggered by a specific message that marks an ``interesting'' moment in the system execution (e.g., right before a leader is selected in consensus protocols). 
%Section~\ref{subsec:controller} further elaborates on these fault controls.

Otherwise, the fault controller releases the message directly into the system configuration.
Note that the delivered message %\(\mathit{msg'}\) 
may differ from the original one if it has been modified by a fault handler, as explained below.

%\item[\circled{3}]  
%\textbf{Fault Handling.}
\inlsec{\circled{3} Fault Handling}
Once a fault behavior is authorized, the corresponding  handler applies its designated operation to the message.
This results in  one  of the following four outcomes (see Section~\ref{subsec:handler} for details):

\begin{itemize}

\item[(a)]  
The handler modifies the message and sends it
back to the fault controller 
%via \texttt{auth([GT + MD, MSG'', L])}
for
rechecking,
which may 
authorize  another fault behavior to be injected.
This arises in faults such as message tampering and equivocation.

\item[(b)] 
The handler produces one or more messages that  must be rescheduled.
For example, 
a delay fault adds an additional abnormal delay %\(\mathit{d}\) 
to the message.
%which is then inserted into the scheduler's queue.
%a duplication fault simply generates an additional message for that queue.

\item[(c)] The handler discards the message, which may result from a message-loss fault, a network partition, or delivery to a crashed node.

\item[(d)] The message  remains unchanged and is rechecked for other  faults, while the environment (e.g., a network partition or recovery) or the node's status (e.g., a crash or reboot) is altered.

\end{itemize}

\section{Formalizing Fault Injection} \label{sec:fault-injection}
In this section, we formalize the three components   within the fault injector,  namely the scheduler, the fault controller, and the fault handlers, 
together with their interactions.\footnote{
The entire Maude specification is available at
\cite{fault-maude-artifact}.}
%\url{https://github.com/ZooWagon/PerF}.

\subsection{Scheduler} \label{subsec:scheduler}
The scheduler maintains a message queue 
\texttt{msgQueue}, sorted by the messages' expected arrival times, and a global clock, \texttt{clock}, which is  advanced based on these times.
The Maude module \texttt{SCHEDULER}  below presents its declaration as an actor class, along with  its key operation  (corresponding to Step \circled{1} in Fig.~\ref{fig:overview}):\footnote{For brevity, we omit variable declarations while following the Maude convention that variables are written in capital letters.}

\begin{lstlisting}[language=maude,escapechar=~, escapeinside=``]
mod SCHEDULER is
  ...  *** omitted variable declarations and module importations
  
  class Scheduler | clock: Float, msgQueue: List{Msg} . 
  
  *** intercept and schedule an outgoing msg 
  eq [MSG, L]  < sch : Scheduler | clock: GT, msgQueue: MS > 
   = < sch : Scheduler | msgQueue: insert(MS, [GT + md, MSG, L]) > .
  *** tick global clock
  eq tick(< sch : Scheduler | clock: GT, msgQueue: [GT',MSG,L] ; MS >  OBJS) 
   = < sch : Scheduler | clock: GT', msgQueue: MS >  OBJS  [GT',MSG,L] .
endm
\end{lstlisting}

\noindent The first equation (lines 7--8) inserts an outgoing message \texttt{MSG} into the scheduler's queue with its 
 expected arrival time \texttt{GT + md}, where \texttt{md} is sampled at runtime from a certain probability distribution. 
 Note that the source rule label \texttt{L} has been captured by a
preprocessing step (Section~\ref{sec:tool}).
%, which attaches to each outgoing message the label of the rule that produced it 
The second equation (lines 10--11) advances the global clock to the corresponding  time \texttt{GT'} when the message is forwarded to the fault controller for potential manipulation.

\subsection{Fault Controller}
\label{subsec:controller}

\begin{table}[t]
\centering
\caption{Priority levels of fault behaviors. Level~1: time-triggered behaviors; Level~2: message-modification behaviors; Level~3: message-induced behaviors.}
\resizebox{\columnwidth}{!}{
\begin{tabular}{|c|c|c|c|c|} 
%\hline
\toprule
    \textbf{Behavior} &  \textbf{Description}&\textbf{Level 1 } & \textbf{Level 2 } & \textbf{Level 3} \\

\midrule
    \texttt{msg-loss} & drop an in-transit message & & $\bullet$ &  \\
%\hline

     \texttt{tampering} & modify the message content&  & $\bullet$ &  \\
%\hline
     \texttt{equivocation} & send different message copies to different receivers & & $\bullet$ &  \\
%\hline
\midrule
     \texttt{part-time} & 
      network  is partitioned when the global time reaches a specific  point
     & $\bullet$ &  &  \\

     \texttt{part-msg} & partition the network right before delivering a specific message& &  & $\bullet$ \\

     \texttt{part-drop} &drop a message across partitions&  & $\bullet$ &  \\

     \texttt{recover-time} & partition  is recovered when the global time reaches a specific  point
     &$\bullet$ &  &  \\

     \texttt{recover-msg} &recover the partition right before delivering a specific message& &  & $\bullet$ \\
%\hline
\midrule
     \texttt{crash-time} & a node crashes when the global time reaches a specific  point &$\bullet$ &  &  \\

     \texttt{crash-msg} &crash a node right before delivering a specific message & &  & $\bullet$ \\

     \texttt{crash-drop} & drop a message destined for the crashed node && $\bullet$ &  \\

     \texttt{reboot-time} & a node is rebooted when the global time reaches a specific  point& $\bullet$ &  &  \\

     \texttt{reboot-msg} & reboot a crashed node right before delivering a specific message & &  & $\bullet$ \\
%\hline
\midrule
     \texttt{duplication} & create extra copies of the message& &  & $\bullet$ \\
%\hline
   \texttt{abnormal-delay} & add an extra delay to the message & &  & $\bullet$ \\
%\hline
\bottomrule
\end{tabular}
}
\label{tb:priority}
%\vspace{-2ex}
\end{table}

Fault control is complex as our framework incorporates a wide range of faults, some of which encompass multiple behaviors.
This complexity is further amplified when 
%multiple faults interact with the system and 
several behaviors become applicable simultaneously, making it hard for the  controller to determine  when and how they should be injected effectively.

\inlsec{Prioritizing Fault Behaviors}
We introduce a set of fault-behavior priority levels
to enable fine-grained control over these behaviors.
Based on the precondition required to trigger each behavior (see below), we organize all fault behaviors into three levels,
where Level $i$ has higher priority than Level $j$ for $i < j$; therefore, behaviors at Level $i$ are always triggered before those at Level $j$ at runtime. 
If multiple fault behaviors at the same level become simultaneously applicable, the controller uniformly selects one to trigger, with the remaining behaviors considered in subsequent iterations.
Table~\ref{tb:priority} summarizes these behaviors, along with their descriptions and assigned priority levels.

\inlsecit{Level 1: Time-Triggered Behaviors}
These behaviors are triggered purely by time.\footnote{Our framework allows  users to  specify the time points at which  these behaviors are triggered; see
Appendix~\ref{appendix:tool}.
%\cite[Appendix~\ref{appendix:tool}]{tech-rpt}. %  or configure the tool to select them randomly.
} 
As the global clock may be advanced by messages whose expected arrival times go beyond the scheduled times of these behaviors, all pending ones at this level must be applied before processing any  behaviors of other levels.
%associated with those messages (i.e., Level 2 or Level 3 behaviors). 
For example, consider a network partition (\texttt{part-time}) scheduled to occur at time $t_1$, but the global time has already been advanced to $t_2$ by a 
cross-partition
message $m$, where $t_2 > t_1$. 
If the partition is not triggered first, $m$ would be delivered to its destination, which is semantically incorrect, as  no message should be  delivered across partitions once the network partition has taken effect.

\inlsecit{Level 2:  Message-Modification Behaviors}
These behaviors operate on a message while it is still mutable, i.e., modifiable or removable, assuming that all Level 1 behaviors  have already been resolved.  
They directly alter the message's content (the behaviors \texttt{tampering} 
%delivery outcome (the behavior 
and \texttt{equivocation}) or existence (the behaviors \texttt{msg-loss}, \texttt{part-drop}, and \texttt{crash-drop}).

\inlsecit{Level 3: Message-Induced Behaviors}
These behaviors require the message to be stable, i.e., its content and destination have been finalized, 
and
they act based on this finalized message without modifying it. 
Such behaviors may create new message copies (\texttt{duplication}), add an extra delay (\texttt{abnormal-delay}), or trigger message-induced changes to the system or environment  (e.g., \texttt{part-msg} for partitioning the network and \texttt{reboot-msg} for rebooting a crashed node).

\inlsec{Formalizing Fault Control}
The   module \texttt{CONTROLLER} specifies the fault controller, as shown below.
It imports all predefined fault handlers (Section~\ref{subsec:handler})  relevant to the user-specified faults (line 2).
Here, we use  message loss and network partition  to illustrate how the controller operates (referring to Step~\circled{2}).
%\todo{1. level 1; owise, GT is useless? 2. how to use L? 3. how to tell the priority? }

\begin{lstlisting}[language=maude,escapechar=~, escapeinside=``]
mod CONTROLLER is
  inc MSG-LOSS + PARTITION .  *** imported according to the user config
    
  class Controller | fbhvs : List{Bhv}, priority : Map{Bhv, Nat} .

  ceq [GT, MSG, L]  < ml : MsgLoss | >  < pt : Partition | >
      < ctrl : Controller | fbhvs: BS, priority: P >
    = < ml : MsgLoss | >  < pt : Partition | >  < ctrl : Controller | >
      (if isSatisfied(BS,[GT,MSG,L], < ml : MsgLoss | > < pt : Partition | >) 
         then auth([GT, MSG, L], B)  *** fault behavior authorized   
         else MSG fi)  *** msg released into the system config
    if B := topBhv(BS,P,[GT,MSG,L],< ml : MsgLoss | > < pt : Partition | >) .

  *** check whether message loss can be triggered
  eq isSatisfied(msg-loss, [GT, CONTENT from O to O', L], 
       < ml : MsgLoss | lossReceivers: OS, lossRules: LS, lossRate: R > HDLS) 
   = O' in OS and L in LS and rand < R .

  *** check whether network partition can be triggered by time
  eq isSatisfied(part-time, [GT, MSG, L], 
        < pt : Partition | status: S, occurTime: T > HDLS) 
   = GT >= T and S == healthy .
  ...
endm
\end{lstlisting}

The controller is modeled as an actor class with two attributes (line 4):
\texttt{fbhvs}, which specifies the list of all fault behaviors to be injected,
and \texttt{priority}, which maps each fault behavior to a natural number indicating its priority level.
Both attributes are initialized according to the user configuration; see 
Appendix~\ref{appendix:tool} 
%\cite[Appendix~\ref{appendix:tool}]{tech-rpt} 
for an example  with the message-loss and network-partition fault behaviors.

The controller's operation is defined by a conditional equation (lines 6--12), where the controller
 evaluates all fault behaviors  in \texttt{fbhvs} via the predicate \texttt{isSatisfied} to determine which behaviors are eligible for triggering (line 9).
For example, 
the \texttt{msg-loss} behavior is eligible  when 
the receiver \texttt{O'} belongs to the designated set of receivers \texttt{OS} whose incoming messages may be dropped,
the source rule label matches (by the predicate \texttt{L in LS}),  which identifies a specific execution point relevant to that receiver,
and a probabilistic draw, denoted by the random number \texttt{rand}, falls below the user-specified loss rate \texttt{R} (lines 15--17). 
The \texttt{part-time} behavior becomes eligible  when the global time exceeds the configured trigger time of the network partition (i.e., \texttt{GT >= T})
and the current status of the network is \texttt{healthy}
(lines 20--22).

When multiple  behaviors are eligible, the controller selects the highest-priority one based on  the predefined levels \texttt{P}
via  the function  \texttt{topBhv} 
(line 12). 
For instance, \texttt{part-time} is chosen over  \texttt{msg-loss} when both apply. 
The controller then forwards the message, along with the selected  behavior to the corresponding  handler (line~10), or 
releases it into the system configuration otherwise (line~11).

\subsection{Fault Handlers}
\label{subsec:handler}

A fault handler is the component  for executing the designated fault behavior once it has been authorized by the controller.
All  handlers are modeled as actors,
each with its operation defined in its own module.
Depending on the specific  behavior, a handler may produce one of the four outcomes, as described in Step~\circled{3}.

The module \texttt{PARTITION} below exemplifies how the fault handler manages three network-partition behaviors of different levels. 
For brevity, only the attributes of the \texttt{Partition} class that are relevant to these behaviors are shown.

%  ...  *** omitted var declarations, the handling of other fault bhvs, etc.
\begin{lstlisting}[language=maude]
mod PARTITION is
  class Partition | status : NetworkStatus, allNodes : Set{Oid}, 
                    parts : Partitions, occurTime : Float, ...  

  *** Level 1: a network partition is triggered by a specific time
  eq auth([GT, MSG, L], part-time)
     < pt : Partition | status: healthy, parts: [OS1 | OS2], allNodes: OS >
   = < pt : Partition | status: partitioned,  
           parts: (if OS1 == empty then randomPart(OS) else [OS1 | OS2] fi) >
     [GT, MSG, L] .

  *** Level 2: a message is dropped across partitions
  eq auth([GT,MSG,L],part-drop) < pt : Partition | > = < pt : Partition | > .
    
  *** Level 3: a network partition is triggered by a specific message
  eq auth([GT, MSG, L], part-msg)
     < pt : Partition | status: healthy, occurTime: T,
                        allNodes: OS, parts: [OS1 | OS2] >
   = < pt : Partition | status: partitioned, occurTime: GT, 
           parts: (if OS1 == empty then randomPart(OS) else [OS1 | OS2] fi) >
     [GT, MSG, L] .
  ...
endm
\end{lstlisting}

The network is partitioned either
at a specific time point (lines 6--10) or upon the arrival of a  designated message (lines 17--22).
A partition is created by switching the network status from \texttt{healthy} to \texttt{partitioned}.
The partition sets are either preconfigured, e.g., 
 \texttt{[OS1 | OS2]} denotes two disjoint node sets \texttt{OS1} and \texttt{OS2}
 that are instantiated in the initial state, 
 or generated probabilistically via the function \texttt{randomPart}.
 Once the partition is applied, the message is returned to the controller for further fault processing, as discussed earlier. 
 This corresponds to the outcome (d) in   Step~\circled{3}. 
 
During a network partition, any message sent across  the two partitions is dropped. 
This is signaled  by \texttt{part-drop}
and realized by the equation at line~13, 
corresponding to the outcome (c) in Step \circled{3}.

\subsection{Fault Injector} \label{subsec:fault-injector}
The fault injector is formed by importing 
the two modules \texttt{SCHEDULER} and \texttt{CONTROLLER}  introduced earlier,
with the latter also importing all user-specified fault handlers such as \texttt{MSG-LOSS} and  \texttt{PARTITION}.

\begin{lstlisting}[language=maude,numbers=none]
mod FAULT-INJECTOR is    inc SCHEDULER + CONTROLLER .    endm
\end{lstlisting}

As we will see in the next section, our framework automates the fault injection for quantitative performance analysis, including statistical model checking (SMC). 
To ensure that the composed system model with the fault injector is suitable for SMC, we prove that it satisfies the \emph{absence of nondeterminism} (AND) property~\cite{pmaude} (see also Section~\ref{sec:prelim}).

%Since most statistical model checkers require  the underlying system to be purely probabilistic,  we show that the composed system model with the fault injector satisfies the \emph{absence of nondeterminism} (AND) property.  Intuitively, this property requires that, for any given system state, the next  transition is uniquely determined, with no nondeterministic branching. The proof is given in Appendix~\ref{appendix:proof}.

\begin{theorem}  \label{theorem:and-comp}
The composed system model with fault injection 
guarantees AND.
\end{theorem}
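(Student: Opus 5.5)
The argument follows the standard route for actor models in PMaude (as in~\cite{pmaude,DBLP:journals/pacmpl/LiuMOZB22}). Probabilistic choices are resolved by sampling inside equations, and the only rewrite step left is the global \texttt{tick} that delivers one scheduled message. AND then amounts to two facts. First, the equational part of the composed theory is confluent and terminating, so every configuration has a unique canonical form (given the sampled values). Second, in every canonical configuration at most one rewrite rule instance is enabled, and its result is uniquely determined up to the probabilistic sampling of the fresh variables. We assume, as in prior work on SMC with Maude, that the original system model already satisfies AND in this sense. Concretely, each system rule is triggered by consuming exactly one delivered message, and the outputs of such a rule are fresh outgoing messages rather than messages placed directly into the configuration.

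\textbf{Step 1: the equations of the fault injector are deterministic.} We go through the equations of \texttt{SCHEDULER}, \texttt{CONTROLLER}, and each fault handler, and check that their left-hand sides do not overlap, or that any critical pairs are joinable. The interception equation matches an unscheduled \texttt{[MSG, L]} together with the unique scheduler object. Its result is independent of the order in which several outgoing messages are inserted, because \texttt{insert} produces a sorted list. Ties in arrival time are broken by a fixed ordering, and since delays are continuous samples, ties occur with probability zero. The controller equation matches a timestamped triple \texttt{[GT,MSG,L]}. The choice it makes is functional: \texttt{topBhv} returns a single behavior, determined by the priority map \texttt{P}. Same-level ties are resolved by a sampled uniform choice, which is a probabilistic choice rather than a nondeterministic one. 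Each handler equation matches a pair \texttt{auth(\_, b)} for a distinct behavior constant \texttt{b}, and the handlers carry disjoint status guards. Termination follows from a simple measure: at each time instant, each controller/handler round either consumes the message or strictly shrinks the set of pending behaviors. Level~1 behaviors flip a status, Level~2 behaviors remove or rewrite the message, and Level~3 behaviors reschedule or release it. One subtlety is that tampered messages re-enter the controller. For this we argue that each handler fires at most once per message, using its bookkeeping attributes.

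\textbf{Step 2: at most one message is in flight through the injector at a time.} We show that in every canonical form there are no triples, no \texttt{auth} terms, and no bare system messages outside what \texttt{tick} just released. After canonicalization, all outgoing messages have been absorbed into \texttt{msgQueue}. Released messages can only be consumed by the system rule for their recipient. So the enabled transitions are exactly the following: either the system rule for the single delivered message, or \texttt{tick} when no delivered message is pending. Uniqueness of the system-rule instance is exactly the assumed AND of the original model.

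\textbf{Step 3: the main obstacle.} The hardest part is the interleaving argument in Step~2. This is where we must show that the preprocessing of Section~\ref{sec:tool} (labelling outgoing messages and wrapping state in \texttt{tick}) prevents two system messages from being simultaneously deliverable, and that a handler releasing a duplicated message cannot create two deliverable copies at once. I would handle this with an invariant on reachable states: ``the configuration contains at most one unscheduled system message, and none whenever \texttt{tick} is applicable.'' I would first check that the invariant holds in the initial state, and then prove it is preserved by each equation and rule, case by case over the four handler outcomes (a)--(d). For duplication, I would show that the extra copies are sent back into the scheduler queue (outcome (b)) rather than released directly into the configuration.
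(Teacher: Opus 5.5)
\textbf{Gap: object-triggered rules.} You assume every system rule is triggered by consuming exactly one delivered message. The theorem, however, must cover user models that also contain \emph{object-triggered} rules, which fire without consuming any message; the \texttt{[decision]} rule of the 2PC running example is one. Once such rules are allowed, your Step~2 fails. After a message-triggered rule, the receiving object may be left enabled for an object-triggered rule while the scheduler queue is nonempty. Both that rule and \texttt{tick} (followed by the rule consuming the freshly released message) are then available. That is a nondeterministic choice, not a probabilistic one, and your Step~3 invariant (``at most one unscheduled system message, and none whenever \texttt{tick} is applicable'') does not exclude it.

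The missing idea is the preprocessing the paper actually relies on. An \texttt{eagerEnabled} equation is generated for each object-triggered rule, and the \texttt{tick} equation is guarded by \texttt{not eagerEnable(OBJS)}. So an enabled object-triggered rule always fires before any further message is released. This is the paper's Type~(1) case, with Assumption~(2) supplying a unique rule and substitution. Reading the preprocessing as ``wrapping state in \texttt{tick}'' misses this. The real interleaving hazard is an enabled object racing against the next delivery, not two simultaneously deliverable messages. The latter is already handled by the one-message-per-tick discipline and by sending duplicated copies back to the queue, as you note.

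\textbf{Smaller issues.} The base case needs an explicit hypothesis on the initial state, like the paper's Assumption~(3) that exactly one object is enabled. A user initial state with two bare messages, or a bare message plus an enabled object, already violates AND, so ``checking the invariant initially'' is not automatic. You also should not assume the original model satisfies AND. In the paper the user model is untimed and nondeterministic, and what is needed is only local determinism per delivered message or enabled object (Assumptions~(1)--(2)). In the paper \texttt{tick} is an equation, so the only rewrite rules are the system rules. Your opening claim that \texttt{tick} is the only remaining rewrite step also contradicts your own Step~2.

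Your Step~1 is more careful than the paper, which simply asserts that the transient Types~(4)--(5) are resolved deterministically by equations. Arguing confluence and termination of the injector's equations, and that arrival-time ties have probability zero, is a useful addition. One caveat: the claim that each handler fires at most once per message, for re-entering tampered messages, relies on bookkeeping that is not part of the handlers as presented and would need to be exhibited.
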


Intuitively, AND holds throughout the entire reachable state space:
it holds in the initial state (base case) and is preserved by every rewrite  (inductive step), and thus cannot be violated along any execution of the composed system.
A detailed proof is given in
Appendix~\ref{appendix:proof}.

%% file: sections/impl.tex
\section{%The \ourtool Tool  
Automating Fault-Injection Analysis}
\label{sec:tool}

\begin{figure}[t]
%\begin{wrapfigure}{r}{0.5\textwidth}
  %\vspace{-6ex}
  \begin{center}
    \includegraphics[width=1\textwidth]{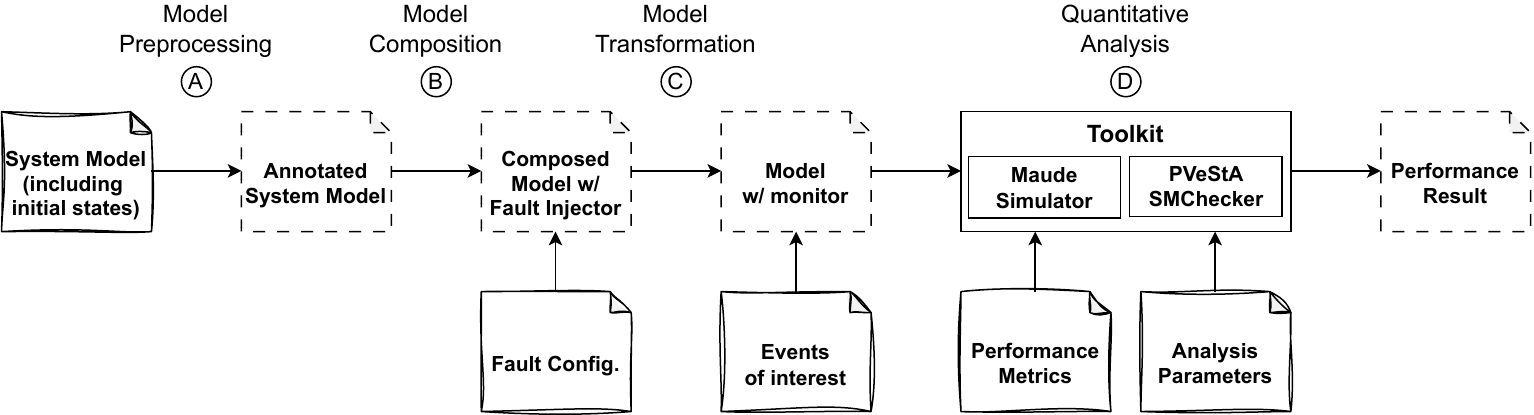}
  \end{center}
  \captionsetup{skip=0pt}
  \caption{The pipeline of \ourtool. Sketched files are provided by users, while dashed files are generated automatically by the tool. The composed model with the fault injector in Step \circled{B}  corresponds to Fig.~\ref{fig:overview}.}
  \label{fig:tool}
 % \vspace{-2ex}
%\end{wrapfigure}
\end{figure}

We have implemented our fault-injection framework in Maude and
incorporated several additional components into an automated tool, called \ourtool (available at \cite{fault-maude-artifact}), 
 to enable end-to-end performance prediction. %\footnote{The  \ourtool tool is available at \url{https://github.com/ZooWagon/PerF}.} 
The overall tool pipeline comprises four steps,
as shown in Fig.~\ref{fig:tool}:

\inlsec{\circled{A} Model Preprocessing}
The tool annotates the user-provided system model, typically nondeterministic and untimed, by attaching to each outgoing message \(\mathit{msg}\) 
%both a delay (variable) \texttt{D} and 
the label \(l\) of the rule  that generates it, yielding \texttt{[\(\mathit{msg}\),\(l\)]}. 
%At runtime, the delay is sampled from a user-specified probabilistic distribution; our framework provides a library of network latency distributions  that reflect the characteristics of real data centers~\cite{Benson10}.

\inlsec{\circled{B} Model Composition}
This step corresponds to the discussion in Section~\ref{sec:fault-injection},
where the annotated system model is composed with  the fault injector that 
consists of the scheduler, the fault controller, and the relevant fault handlers.
Before injecting any faults, the scheduler augments each message with a normal delay that is
sampled at runtime from a user-specified probability distribution.  
Our tool provides a collection of network latency distributions, including lognormal, Weibull,  and exponential,  that reflect the characteristics of real data centers~\cite{Benson10}. 
It also offers an interface
through which users can specify
both the types of faults to inject (e.g., message loss) and their injection parameters 
 (e.g., a $10\%$ loss rate for acknowledgments from followers to the leader).

\inlsec{\circled{C} Model Transformation}
The composed model is then transformed into one equipped with a monitor that  timestamps events during execution, e.g., the times at which a transaction is issued and committed.
We realize this monitoring mechanism by extending the approach introduced in~\cite{DBLP:journals/pacmpl/LiuMOZB22} to support logging multiple events of interest,  allowing users to readily define and compute performance metrics such as throughput and abort rate over these events.

\inlsec{\circled{D} Quantitative Analysis}
The toolkit performs  quantitative analysis using either Maude's built-in simulator or the  PVeStA statistical model checker~\cite{pvesta}.
In both modes,  simulations are
produced by executing the transformed model from the initial state.
For statistical model checking, 
users additionally provide performance properties expressed as QuaTEx formulas, along with experimental parameters (e.g., the statistical confidence level).
The analysis output is either the raw simulation result or the expected value of a given QuaTEx formula.

\smallskip
We illustrate these steps using the Two-Phase Commit  protocol, 
which is deferred to 
Appendix~\ref{appendix:tool}.
%\cite[Appendix~\ref{appendix:tool}]{tech-rpt}.
%due to space limitations.
%, and more examples are available at~\cite{fault-maude-artifact}.
Note  that, in addition to Theorem~\ref{theorem:and-comp}, we  prove that the model transformation in Step \circled{C} also preserves  AND, 
making the resulting model suitable for
SMC analysis.
The proof is given in 
Appendix~\ref{appendix:proof}.
%\cite[Appendix~\ref{appendix:proof}]{tech-rpt}.

\begin{comment}
Both the $F$ and $L$ transformations have been implemented in Maude.
Our framework further integrates both Maude's built-in simulator and the parallelized PVeStA statistical model checker~\cite{pvesta} into a  tool called \ourtool (available at \cite{fault-maude-artifact}), which supports end-to-end quantitative analysis of performance properties,
expressed by  QuaTEx
probabilistic temporal logic~\cite{pmaude}, over the integrated model.

Figure~\ref{fig:tool} depicts \ourtool's  pipeline. 
The F transformer 
incorporates the (extensible) fault  library \texttt{FAULT-LIB}  and 
implements the $F$ transformation.
It takes as input the system model (including the initial state) and user-specified fault parameters (e.g., which faults to inject and when), and produces an integrated model with the specified faults injected.
The L transformer realizes the $L$ transformation by additionally taking as input user-defined interesting events,
which  are logged during system executions.
Finally, the toolkit performs the quantitative analysis using the logged model, along with user-provided performance properties (e.g., throughput) defined over the logs and analysis parameters (e.g., confidence level for statistical model checking).
\todo{Appendix~\ref{appendix:tool}} demonstrates the use of \ourtool on our running example.
\end{comment}

%% file: sections/case-study.tex
\section{Case Studies}
\label{sec:exp}

We apply \ourtool to six  distributed systems of different kinds:    
(1) the widely used atomic commitment protocol 2PC,  which integrates the Cooperative Termination Protocol (CTP)~\cite{DBLP:books/aw/BernsteinHG87} to mitigate its  blocking issues during failures; %\footnote{CTP can still complete a transaction even when all partitions have prepared but only one has committed, e.g., the commit messages to all other partitions are dropped.} 
(2) the consensus protocol Raft~\cite{raft}, which achieves fault-tolerant log replication;
%for many modern coordination systems;
(3) the authoritative DNS server PowerDNS~\cite{powerdns}, which is widely deployed in production networks to ensure reliable name resolution;
(4) the quorum-based consistency protocol of Apache Cassandra~\cite{cassandra}, which balances data consistency and availability via read/write quorums;
(5) the RAMP distributed  transaction protocol~\cite{ramp}, along with its optimization, which has  been layered atop Facebook's TAO~\cite{ramp-tao}; 
and
 (6) the Byzantine fault-tolerant consensus protocol Fast-HotStuff~\cite{fast-hotstuff}, which improves the classical HotStuff design for low-latency agreement.

Table~\ref{table:exp} summarizes these case studies, including the system types, modeling and implementation efforts,  %(measured by the size of the codebase), 
and the injected faults considered in our analysis.
%We develop the formal models primarily based on the high-level or pseudo-code descriptions published by the original system designers; two exceptions are (3) and (5) whose Maude specifications are obtained from publicly available repositories. 

\begin{table}[t]

\caption{Summary of case studies.
$\dagger$:  Maude specifications obtained from the public repositories;
$\ddagger$: implementations developed in this work.}
\label{table:exp}
%{\rowcolors{1}{green!80!yellow!50}{green!70!yellow!40}
\resizebox{\textwidth}{!}{
\begin{tabular}{cccccc}
\toprule
\textbf{System} & \textbf{Kind} &\textbf{Model } & \textbf{Impl. }&\textbf{Injected Fault} & \textbf{Deployment}  \\

 & &\textbf{LoC} & \textbf{LoC}& & \textbf{Platform}  \\

\midrule

2PC with CTP & atomic commitment & 314 & 354$^\ddagger$ &  message loss&   CloudLab Utah (d6515) \\ 

Raft & consensus & 536 & 1400 & node crash &CloudLab Utah (xl170) \\

PowerDNS & name resolution & 2200$^\dagger$ & 4887 & delayed messages & DNS testbed~\cite{liu-sigcomm}\\

Cassandra Quorum &  consistency & 465 & 1402$^\ddagger$ & network partition &CloudLab Clemson 
\\ 

RAMP \& OPW & concurrency control & 2523$^\dagger$ & 7977 & message loss &Tencent Cloud \\

Fast-HotStuff & Byzantine consensus & 304 & 4247 &equiv. \& network part. &Emulab %d710 
\\

% N-Tube & botnets& bandwidth alloc. ratio&  & --\\
\bottomrule
\end{tabular}
}
%}
%\vspace{-3ex}
\end{table}

\subsection{Experimental Setup and Rationale}
\inlsec{Setup}
We predict system performance from  the formal models under 
different network latency distributions, including lognormal and exponential.
%and reuse their Maude implementations from prior work~\cite{DBLP:journals/pacmpl/LiuMOZB22}.
%Simulations are performed using Maude's built-in simulator.
We employ a cluster of Emulab machines~\cite{emulab} to parallelize the PVeStA analysis, 
%the SMC analysis with PVeStA; 
each with a 2.4 GHz quad-core Xeon processor and 12 GB of RAM.
The statistical confidence level 
$\alpha$
is set to 95\%, %($\alpha = 0.05$), 
and the error margin  $\delta$
to 0.01.
All three steps prior to SMC
are performed on a single machine,
each completing instantly.

We deploy the actual  systems across multiple platforms (see Table~\ref{table:exp}), including CloudLab (and its different clusters) and Tencent Cloud,  each  exhibiting distinct characteristics  of real-world distributed environments. 
For each system, we implement dedicated fault injectors that correspond to our model-based analysis.

%We used the lognormal or exponential
%distribution %(with parameters $\mu=0.0$ and $\sigma=1.0$)  
%for network latency 
% characterizing the network latency in realistic deployments
%\cite{Benson10,DBLP:conf/allerton/GhoshR18a}. 
%We also considered a linear modulation for message payload size and distance~\cite{DBLP:conf/networking/GuntherH05}.  

\inlsec{Evaluation Rationale}
Our evaluation rationale follows prior work~\cite{DBLP:journals/pacmpl/LiuMOZB22,liu-nfm}:
we consider a model-based prediction as accurate if it closely mirrors the observed behaviors in the deployment evaluation in terms of overall curve trends, up to a certain scaling factor.
This criterion  is motivated by  the fact that 
 model-based predictions seldom align numerically with empirical results.\footnote{Even the same implementation evaluated on different platforms
rarely yields numerically identical results due to factors like distinct network latency distributions~\cite{DBLP:journals/pacmpl/LiuMOZB22}.}
In a probabilistic model, quantitative values are inherently tied to an abstract unit of time.
This unit only approximates the real timing observed in deployed systems, and the corresponding scaling factor remains unknown at the design stage.

%%
%% exp data and plots

\pgfplotsset{height=100pt,width=200pt}
\pgfplotsset{tick style={draw=none},
legend style={draw=none}}

\begin{figure}[t!]
  % \ContinuedFloat   % 此行表示续前面的子图序号进行子图编号
   %     \centering

\begin{subfigure}{0.49\linewidth}
    \centering
    \resizebox{\textwidth}{!}{
    \begin{tikzpicture}
\begin{axis}[
    xlabel={message loss rate},
    ylabel={avg. latency (s)},
    xmin=0, xmax=0.4,
    ymin=1, ymax=5,
    xtick={0,0.05,0.1,0.15,0.2,0.25,0.3,0.35,0.4},
    xticklabels={0,0.05,0.1,0.15,0.2,0.25,0.3,0.35,0.4},
    ytick={0,1,2,3,4,5},
    legend pos=south east,
    %ymajorgrids=true,
    grid style=dashed,
]
\addplot[color=blue,mark=square,mark size=3pt] table [x=rate, y=maude, col sep=comma] {exp-fig/2pc-data-ol.csv};
\addplot[color=magenta,mark=triangle,mark size=3pt] table [x=rate, y=imp, col sep=comma] {exp-fig/2pc-data-ol.csv};
    \legend{model, deploy}
\end{axis}
\end{tikzpicture}
}
   \captionsetup{skip=0pt}
\caption{2PC with CTP (1 t.u.=10ms)}
\label{subfig:2pc}%文中引用该图片代号
\end{subfigure}
\hfill
\begin{subfigure}{0.49\linewidth}
    %\centering
    \resizebox{\textwidth}{!}{
    \begin{tikzpicture}
    \begin{axis}[
        xlabel={leader election latency (ms)},
        ylabel={cumulative \%},
        ymin=0, ymax=100,
        xmin=500, xmax=3500,
        ytick={0,25,50,75,100},
        xtick={500,1000,1500,2000,2500,3000,3500},
        legend pos=south east,
        %ymajorgrids=true,
        grid style=dashed
    ]
    \addplot[color=blue] table [x=x_mau, y=y_mau, col sep=comma] {exp-fig/raft-maude-ol.csv};
    \addplot[color=magenta,densely dotted,thick] table [x=x_imp, y=y_imp, col sep=comma] {exp-fig/raft-imp-ol.csv};
    \legend{model, deploy}
    \end{axis} 
    \end{tikzpicture} 
    }
       \captionsetup{skip=0pt}
\caption{Raft (1 t.u.=25ms)}
\label{subfig:raft}%文中引用该图片代号
\end{subfigure}

\begin{subfigure}{0.49\linewidth}
%centering
\resizebox{\textwidth}{!}{
\begin{tikzpicture}
\begin{axis}[
    xlabel={message delay (s)},
    ylabel={avg. latency (s)},
    xmin=0, xmax=1,
    ymin=0, ymax=10,
    xtick={0,0.2,0.4,0.6,0.8,1},
    ytick={0,2,4,6,8,10},
    legend pos=north west,
    %ymajorgrids=true,
    grid style=dashed
]
\addplot[color=blue,mark=square,mark size=3pt] table [x=dt, y=m-power, col sep=comma] {exp-fig/dns-data-ol.csv};
\addplot[color=magenta,mark=triangle,mark size=3pt] table [x=dt, y=imp-power, col sep=comma] {exp-fig/dns-data-ol.csv};
%\addplot[color=violet,mark=square,mark size=3pt] table [x=dt, y=m-bind, col sep=comma] {exp-fig/dns-data-ol.csv};
%\addplot[color=brown,mark=triangle,mark size=3pt] table [x=dt, y=imp-bind, col sep=comma] {exp-fig/dns-data-ol.csv};
\legend{model, deploy,Maude-Bind, Imp.-Bind}
\end{axis}
\end{tikzpicture}
}
   \captionsetup{skip=0pt}
\caption{PowerDNS (1 t.u.=1s) %\todo{remove Bind}
}
\label{subfig:dns}%文中引用该图片代号
\end{subfigure}   
\hfill
\begin{subfigure}{0.49\linewidth}
    \centering
    \resizebox{\textwidth}{!}{
    \begin{tikzpicture}
\begin{axis}[
    xlabel={partition duration (s)},
    ylabel={tput (txn/s)},
    xmin=0, xmax=25,
    ymin=8000, ymax=12000,
    xtick={0,5,10,15,20,25},
    ytick={8000,9000,10000,11000,12000},
    scaled ticks=true,
    legend style={at={(1, 1.7)},  legend columns=2, column sep=5pt},
    %legend pos=south west,
    %ymajorgrids=true,
    grid style=dashed
]
% \addplot[color=blue,mark=square,mark size=3pt]
%     coordinates {
%     (0,10568)(5,10630)(10,10340)(15,10459)(20,10352)(25,10336)
%     };
\addplot[color=blue,mark=square,mark size=3pt] table [x=t, y=mtb, col sep=comma] {temp-qc-data.csv};
% \addplot[color=magenta,mark=triangle,mark size=3pt]
%     coordinates {
%     (0,10567)(5,10404)(10,10273)(15,10502)(20,10308)(25,10300)
%     };
\addplot[color=magenta,mark=triangle,mark size=3pt] table [x=t, y=imptb, col sep=comma] {temp-qc-data.csv};
% \addplot[color=violet,mark=square,mark size=3pt]
%     coordinates {
%     (0,10139)(5,9680)(10,9423)(15,9193)(20,8841)(25,8613)
%     };
\addplot[color=orange,mark=asterisk,mark size=3pt] table [x=t, y=mta, col sep=comma] {temp-qc-data.csv};
% \addplot[color=brown,mark=triangle,mark size=3pt]
%     coordinates {
%     (0,10364)(5,10104)(10,9718)(15,9423)(20,8941)(25,8672)
%     };
\addplot[color=gray,mark=diamond,mark size=3pt] table [x=t, y=impta, col sep=comma] {temp-qc-data.csv};
% \legend{TB-Maude, TB-Imp.}
\legend{model (orig.), deploy (orig.), model (alter.), deploy (alter.)}
\end{axis}
\end{tikzpicture}
}
   \captionsetup{skip=0pt}
\caption{Cassandra Quorum (1 t.u.=1s) }
\label{subfig:qc}%文中引用该图片代号
\end{subfigure}

\begin{subfigure}{0.49\linewidth}
%\centering
\resizebox{\textwidth}{!}{
\begin{tikzpicture}
\begin{axis}[
    xlabel={message loss rate},
    ylabel={avg. latency (ms)},
    xmin=0, xmax=0.4,
    ymin=0, ymax=60,
    xtick={0,0.05,0.1,0.15,0.2,0.25,0.3,0.35,0.4},
    xticklabels={0,0.05,0.1,0.15,0.2,0.25,0.3,0.35,0.4},
    ytick={0,15,30,45,60},
    legend style={at={(1, 1.65)},  legend columns=2, column sep=5pt},
    grid style=dashed,
]
\addplot[color=blue,mark=square,mark size=3pt] table [header=true,x=lr, y=m-ctp, col sep=comma] {temp-ramp-data.csv};
\addplot[color=magenta,mark=triangle,mark size=3pt] table [header=true,x=lr, y=imp-ctp, col sep=comma] {temp-ramp-data.csv};
\addplot[color=orange,mark=asterisk,mark size=3pt] table [header=true,x=lr, y=m-opw, col sep=comma] {temp-ramp-data.csv};
\addplot[color=gray,mark=diamond,mark size=3pt] table [header=true,x=lr, y=imp-opw, col sep=comma] {temp-ramp-data.csv};
\legend{model (orig.), deploy (orig.), model (OPW), deploy (OPW)}
\end{axis}
\end{tikzpicture}
}
   \captionsetup{skip=0pt}
\caption{RAMP (1 t.u.=10ms) }
\label{subfig:ramp}
\end{subfigure}   
\hfill
\begin{subfigure}{0.49\linewidth}
%\centering
\resizebox{\textwidth}{!}{
\begin{tikzpicture}
\begin{axis}[
    xlabel={\# equivocating nodes},
    ylabel={tput (txn/s)},
    xmin=0, xmax=5,
    ymin=0, ymax=10000,
    xtick={0,1,2,3,4,5},
    ytick={0,2000,4000,6000,8000,10000},
    legend style={at={(1.1, 1.75)},  legend columns=2, column sep=0pt},
    %ymajorgrids=true,
    grid style=dashed,
]
\addplot[color=blue,mark=square,mark size=3pt] table [x=en, y=m-e, col sep=comma] {temp-fhs-data.csv};
\addplot[color=magenta,mark=triangle,mark size=3pt] table [x=en, y=imp-e, col sep=comma] {temp-fhs-data.csv};
\addplot[color=orange,mark=asterisk,mark size=3pt] table [x=en, y=m-ep, col sep=comma] {temp-fhs-data.csv};
\addplot[color=gray,mark=diamond,mark size=3pt] table [x=en, y=imp-ep, col sep=comma] {temp-fhs-data.csv};
\legend{model (eq.), deploy (eq.), model (eq. \& par.), deploy (eq. \& par.)}
\end{axis}
\end{tikzpicture}
}
   \captionsetup{skip=0pt}
\caption{Fast-HotStuff (1 t.u.=10ms)}
\label{subfig:fhs}
\end{subfigure}
   % \captionsetup{skip=5pt}
        \caption{Model-based performance predictions vs. deployment-level evaluations across six distributed systems and their variants. 
        The scaling factor 1 t.u.$=T$ ms means that one time unit in the model corresponds to $T$ ms in the deployment.
        }        
	\label{fig:exp}
%\vspace{-2ex}
\end{figure}
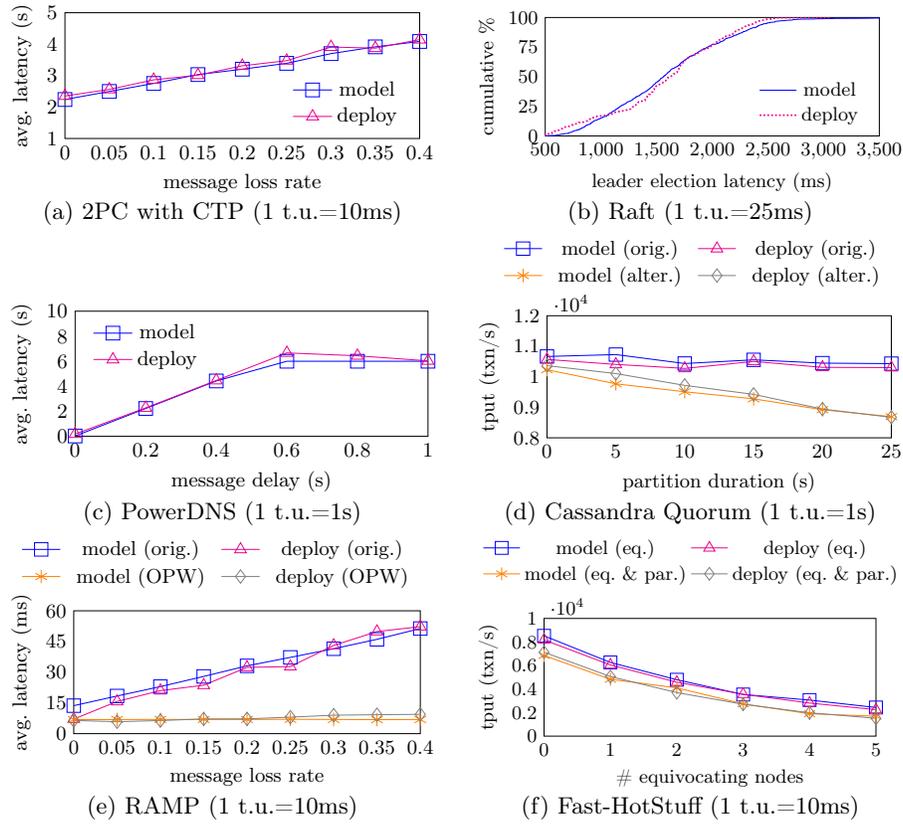

\subsection{Experimental Results}
\textbf{Summary.}
Our experimental results are shown in Figure~\ref{fig:exp}.
 Overall, across all six case studies, we demonstrate the predictive power of our framework by showing that the model-based performance  estimations, specifically throughput and latency, align closely with evaluations of the deployed systems under various types of faults. 
 These include message loss (Figure~\ref{fig:exp}(a) and (e)), node crashes (Figure~\ref{fig:exp}(b)), delayed messages (Figure~\ref{fig:exp}(c)), network partitions (Figure~\ref{fig:exp}(d)), and equivocation combined with network partitions (Figure~\ref{fig:exp}(f)).

In particular, our analysis reveals that although 
%RAMP (and likewise Cassandra's quorum protocol) 
some systems
were 
previously deemed competitive with their variants,
%~\cite{DBLP:journals/lites/LiuGRNGM17,10.1145/3639264},
substantial performance degradation can occur in either the original design or its  optimization
once faults are introduced.
This highlights the strength of \ourtool: it enables the exploration of a much broader spectrum of system behaviors  already at the design stage.
We discuss one such case, namely RAMP and its optimization, along with a mixed-fault analysis of Fast-HotStuff; 
the remaining ones are deferred to 
Appendix~\ref{appendix:exp}.
%\cite[Appendix~\ref{appendix:exp}]{tech-rpt}.

\inlsec{RAMP}
%The Read Atomic Multi-Partition (RAMP) transaction system~\cite{ramp} provides performant read and write operations for partitioned databases while ensuring the read atomicity isolation guarantee.
%In addition to the original design, the developers also introduced an alternative variant with one-phase writes (OPW). Together, these designs offer different trade-offs between system performance and data consistency.
The Read Atomic Multi-Partition (RAMP) system~\cite{ramp} 
 delivers performant transactions and the read atomicity isolation through efficient concurrency control.
In addition to the original design,
its developers introduced a faster variant with one-phase writes (OPW)   that relaxes certain consistency guarantees.

A recent study~\cite{DBLP:journals/pacmpl/LiuMOZB22} found both designs competitive, with the original design incurring only marginally higher latency. However, this conclusion holds only under fault-free conditions.
Using \ourtool, we reveal that as message loss increases, the original design exhibits significantly higher latency than the optimization OPW.
Deployment-level results corroborate our model-based predictions, as shown in Figure~\ref{fig:exp}(e).
Since message loss is inevitable in practical deployments and often intensifies under high network contention, this finding can help practitioners make more informed choices between alternative designs.

\begin{comment}
\inlsec{Cassandra Quorum}
 Cassandra~\cite{cassandra} is a distributed key–value store that employs a quorum-based mechanism to balance data consistency and availability.\footnote{A quorum specifies the number of data replicas that must acknowledge a read or write operation for it to be considered successful.}
An alternative design~\cite{bobba2018survivability} was shown to achieve performance comparable to the original mechanism, while returning more   consistent  data in certain scenarios.
However, the analysis was performed under fault-free network conditions.

As depicted in Figure~\ref{fig:exp}(d),
when network partitions occur,  the original design, by contrast, maintains more stable throughput---even when partition recovery takes longer.
We further conduct deployment-level evaluations, which are consistent with the behaviors observed in our model-based analysis.
Since network partitioning is unavoidable, especially in geo-distributed environments, this finding provides developers with deeper insight into the trade-offs between the two mechanisms in terms of system performance and data consistency.
\end{comment}

%% equivocation in blockchains
%https://ink.library.smu.edu.sg/cgi/viewcontent.cgi?params=/context/sis_research/article/8407/&path_info=3433210.3437516.pdf 

\inlsec{Fast-HotStuff}
Byzantine fault-tolerant consensus plays a pivotal role in modern blockchain systems.
The recently proposed Fast-HotStuff consensus protocol~\cite{fast-hotstuff} offers improved performance and greater robustness against forking attacks compared to the state of the art.
%Such attacks create multiple branches (forks) in the blockchain, resulting from a lack of consensus---often due to equivocation---among participants on the validity of certain blocks.

We use \ourtool to predict how Fast-HotStuff behaves under equivocation attacks~\cite{twins,10.1145/2332432.2332491}, where adversaries  propose conflicting blocks that cause ambiguity among  participants,
leading to forks and delayed consensus progress.
As shown in Figure~\ref{fig:exp}(f), increasing the number of equivocating nodes causes throughput to decline.
Moreover, 
we assess Fast-HotStuff's performance when equivocation and (benign yet prevalent) network partitions coexist.
The results indicate a moderate decrease in throughput, an expected yet reassuring outcome, as the degradation remains limited.
All our model-based predictions align closely with deployment-level evaluations conducted on a real cluster.

Notably, this mixed-fault experiment extends beyond the evaluations performed in the original study~\cite{fast-hotstuff}, providing a more comprehensive understanding of Fast-HotStuff's behavior under more realistic scenarios.
Additionally, this case study  demonstrates the modularity of our framework, which enables rapid exploration of a system's performance space under complex fault combinations.

%% file: sections/related.tex
\section{Related Work} \label{sec:discuss}

 \textbf{Formal Quantitative Analysis.}
We have already discussed in Section~\ref{sec:intro} the limitations of existing approaches to predicting distributed system performance from formal designs. They either ignore faults entirely, couple them directly with system semantics, or examine only a few isolated fault types. 
All of these works are based on Maude~\cite{maude-book} or TLA+~\cite{DBLP:books/aw/Lamport2002}, which  is largely because both formalisms provide high 
expressiveness for modeling large-scale distributed systems. 
On the other hand, 
a number of automata-based quantitative frameworks, such as \textsc{Uppaal-SMC}~\cite{uppaal-smc}, PRISM~\cite{prism},  and \textsc{Modest}~\cite{DBLP:conf/tacas/HartmannsH14},
support statistical or probabilistic model checking over timed or stochastic automata.
Nevertheless, applying these frameworks to our setting remains highly challenging, as certain system characteristics  are  difficult to capture within automata-based models,
e.g., the dynamic  creation of actors or messages that can lead to an unbounded population of such components.
Yet this is common in both normal (e.g., the joining of new nodes)
and faulty scenarios (e.g., message duplication).

\textsc{Performal}~\cite{performal} advances the formal verification of distributed system performance by providing worst-case latency bounds;
%(though it does not extend to throughput analysis);
yet such bounds are generally unattainable, e.g., in the presence of  Byzantine behaviors.
Network calculus~\cite{DBLP:journals/tit/Cruz91a} provides a mathematical foundation for reasoning about networked system performance and has been  applied to establish  latency guarantees~\cite{10.1145/2785956.2787479,DBLP:journals/tit/LiebeherrBC12}.
These approaches address an orthogonal problem;
we aim instead to capture the concrete system dynamics under faults,
covering  a wide range of  quantitative metrics beyond latency, e.g., those expressible in QuaTEx (see examples in~\cite{pmaude,DBLP:journals/pacmpl/LiuMOZB22}).

\inlsec{Posterior Fault Injection}
Recent years have witnessed a growing number of deployment-level fault injectors for assessing the reliability and resilience of distributed systems.
Academic advances include 
CoFI~\cite{cofi} for exposing network-partition failures, 
Chronos~\cite{chronos} for detecting timeout bugs, 
CrashFuzz~\cite{CrashFuzz} for uncovering crash-recovery bugs, 
%Phoenix~\cite{10.1145/3576915.3623071} for revealing blockchain resilience issues, 
and 
ByzzFuzz~\cite{Ozkan-bft} for triggering Byzantine consensus bugs.
Fault-injection techniques have also been widely adopted in production, exemplified by Jepsen~\cite{jepsen}, Netflix's Chaos Monkey~\cite{ChaosMonkey}, AWS's FIS~\cite{fis}, and Facebook's Twins~\cite{twins}.

Compared with these approaches, our framework enables early and proactive performance exploration during the design stage, rather than being reactive and applicable only after system implementation.
Moreover, our work focuses on system performance, quantifying how systems behave under diverse faults, whereas most existing approaches concentrate on  correctness and reliability issues and on how to  trigger them.
It is also worth noting that, although various benchmarks exist across different domains for evaluating  system performance,  e.g., YCSB~\cite{ycsb} for cloud database systems, they do not incorporate faults and therefore can only explore a limited portion of the overall performance space.

%% file: sections/concl.tex
\section{Conclusion}

We have presented a formal framework and tool for predicting distributed system performance  under fault conditions.
Built on Maude, our framework introduces semantics-preserving model compositions that seamlessly integrate system and fault models into a unified representation suitable for statistical analysis.
Case studies demonstrate that the framework accurately captures the performance impact of diverse and co-existing faults, aligning with empirical observations.

This work paves the way for  rapid and comprehensive exploration of the system performance space
 at an early design stage.
When paired with deployment-level fault injection, it offers a complementary means to 
better understand and improve the performance robustness of future distributed systems. 

%% file: sections/appendix.tex
\input{sections/app-tool}

\input{sections/app-proof}

\input{sections/app-exp}

%\clearpage
%\input{sections/app-handler}

%% file: sections/app-tool.tex
\section{The \ourtool Tool}  \label{appendix:tool}

%This section illustrates how \ourtool works through an example, namely the Two-Phase Commit protocol. 

\subsection{Running Example: The Two-Phase Commit Protocol}
Two-Phase Commit (2PC)  is a distributed atomic commitment protocol that ensures all participants in a transaction either commit or abort.
It proceeds in two phases. 
In the \textsc{prepare} phase, the coordinator sends a prepare request to all participants (or cohorts) and waits for their votes. 
Each cohort checks whether it can commit, e.g., by verifying local constraints, and replies with either \emph{yes} or \emph{no}. 
In the \textsc{commit} phase, if all cohorts vote \emph{yes}, the coordinator sends a commit message; otherwise, it propagates an abort. Cohorts then log the final decision. 
%2PC guarantees atomicity but can block if the coordinator fails at certain points, leaving cohorts uncertain about the final decision.
%In practice, 2PC is often  integrated with  the Cooperative Termination Protocol (CTP)~\cite{DBLP:books/aw/BernsteinHG87}  to address its inherent blocking issues (see Appendix~\ref{appendix:exp}).

Note that this is a slightly simplified version of 2PC, in which a cohort is not required to acknowledge to the coordinator that it has received the decision.
We use this simplified version here to focus on illustrating \ourtool.
In our case study, however, we consider the full integration of 2PC with the Cooperative Termination Protocol (CTP)~\cite{DBLP:books/aw/BernsteinHG87} to resolve the inherent blocking issue of 2PC.

%For each proposal, the coordinator starts a voting process among the  participants  (or cohorts) and collects their responses. 
%The proposal is approved only if all cohorts return a ``yes'' vote.  
%This protocol can be viewed as corresponding to the first phase of the well-known two-phase commit (2PC) protocol.
%The two-phase commit (2PC) protocol is the \textit{de facto} atomic commitment protocol used to ensure that a distributed transaction either commits or aborts consistently across all participating nodes. It operates in two phases. In the prepare phase, the coordinator asks all participants to prepare for commit by replying with a ``yes'' (ready) or ``no'' (abort). If all participants are ready, the protocol enters the commit phase, where the coordinator sends a commit message to finalize the transaction; otherwise, it broadcasts an abort. 
The Maude module \texttt{2PC} below specifies the protocol.
For brevity, we %show only the specification of the \textsc{prepare} phase and 
omit some auxiliary function definitions.

%\todo{the outgong msg still include md}
%  ops md pd : -> Float .  *** msg delay (md) and local processing delay (pd)
\begin{lstlisting}[language=maude]
mod 2PC is ...
  vars O O' : Oid .  var OS : Oids .  var P : Proposal .  vars V V' : Bool .
  vars PS PS' : Proposals . vars RS VS : Map{Proposal,Bool} .
  
  class Coord  | proposals : Proposals,   cohorts : Oids, 
                 waiting   : Oids,        results : Map{Proposal,Bool} .
  class Cohort | votes : Map{Proposal,Bool} .  

  op propagate_from_to_: Proposal Oid Oids -> Msgs . 
  eq propagate P from O to (O' OS) = (P from O to O') 
                                     (propagate P from O to OS) .
  eq propagate P from O to empty = null .

 crl [start] :  (start to O)  
                < O : Coord | proposals: P ; PS, waiting: empty,
                              cohorts: OS, results: RS >
            =>  < O : Coord | waiting: OS, results: insert(P,true,RS) >
                (propagate P from O to OS) if not exist(RS,P) .

  rl [vote] : (P from O' to O)  < O : Cohort | votes: (VS, P |-> V) >
            =>  < O : Cohort | >  (vote(P,V) from O to O') .

  rl [collect] : (vote(P,V') from O' to O) 
                 < O : Coord | waiting: (O' OS), results: (RS, P |-> V) >
      =>  < O : Coord | waiting: OS, results: (RS, P |-> (V and V')) > . 

  rl [decision] : < O : Coord | proposals: P ; PS, waiting: empty, 
                                cohorts: OS, results: (RS, P |-> V) >
            =>  < O : Coord | proposals: PS >  (start to O) 
                (propagate decision(P,V) from O to OS) . 

  rl [log] : (decision(P,V) from O' to O)  
             < O : Cohort | votes: (VS, P |-> V') >
            =>  < O : Cohort | votes: (VS, P |-> V) > .
endm
\end{lstlisting}

The coordinator initiates the first proposal \texttt{P} from the pool \texttt{proposals} by propagating it to the cohorts \texttt{OS} (rule \texttt{[start]}). 
Each cohort responds with its locally stored, pre-determined vote for the requested proposal, retrieved via a key–value lookup (rule \texttt{[vote]}).
Upon receiving a vote, the coordinator updates the aggregated decisions 
(through logical conjunction of the newly received and previously collected votes), removes the responder from its waiting list, and continues awaiting the remaining responses (rule \texttt{[collect]}). 
Once all votes have been collected (i.e., the waiting list becomes \texttt{empty}), the coordinator 
propagates the final decision
and
proceeds to issue the next proposal in the pool by removing the current one (rule \texttt{[decision]}).
Finally, the cohorts record the decision (rule \texttt{[log]}).
Note that, for example, the term 
\texttt{propagate\,\,P\,\,from\,\,O\,\,to\,\,OS} (line 18)
reduces to a set of messages sent from the coordinator to the cohorts,
 as defined by the  equations (lines 9--12). 
Note also that 
each voting process is triggered by a self message \texttt{start}
(lines 14 and 29).

%In our framework, the system is specified under the probabilistic message-passing actor paradigm~\cite{pmaude}.
%Hence, each voting process is triggered by a self message \texttt{start}.
%Moreover, all messages---whether between different actors or self-triggered---are associated with delays sampled from certain probabilistic distributions (see below; \todo{Section~\ref{sec:xx}} details how these delayed messages are scheduled). 

The following Maude module \texttt{INIT-2PC} specifies an initial state, 
where one coordinator and two cohorts (each initialized with pre-defined votes) participate in two proposals.
The module \texttt{2PC} described earlier is imported at  line 2.
%Message and local processing delays are modeled by lognormal and exponential distributions, respectively.

%  eq md = sampleLognormal(-3.0, 0.5) .    eq pd = sampleExponential(1000.0) .
\begin{lstlisting}[language=maude]
mod INIT-2PC is 
  inc 2PC .
  
  ops c ch1 ch2 -> Oid .  ops p1 p2 : -> Proposal .
  
  op initState : -> Config .
  eq initState = (start to c)
                 < c : Coord | proposals: p1 ; p2, cohorts: ch1 ch2, 
                               waiting: empty, results: empty >
                 < ch1 : Cohort | votes: p1 |-> true, p2 |-> false >
                 < ch2 : Cohort | votes: p1 |-> true, p2 |-> true > .
endm
\end{lstlisting}

\subsection{How \ourtool Works }
This section illustrates \ourtool's  workflow,  depicted in Fig.~\ref{fig:tool}, using 2PC as a running example.
It
consists of
four steps: 
 \emph{model preprocessing}, \emph{model composition},\emph{model transformation}, and 
 \emph{quantitative analysis}.

\inlsec{Model Preprocessing}
In this step, \ourtool annotates each outgoing message in the user-provided system model with its \emph{source rule label}.
To give users greater flexibility when modeling distributed systems, following prior work~\cite{DBLP:journals/pacmpl/LiuMOZB22}, \ourtool also equips each object-triggered rule with an \texttt{eagerEnable} equation to ensure its immediate execution once enabled.

\paragraph{Attaching Source Rule Labels.}
Whenever a rewrite rule generates messages, \ourtool rewrites that rule by wrapping the message-generation fragment with a labeling function.
This ensures that every outgoing message is annotated with the label of the rule that created it, enabling the controller to determine whether specific fault behaviors should be activated.

The following Maude specification illustrates
how the rule  \texttt{[start]}  in the module \texttt{2PC}
is preprocessed:
%the preprocessed version of the rule \texttt{[start.p]}, together with the auxiliary function \texttt{attachLabel}:

\begin{lstlisting}[language=maude,escapechar=~, escapeinside=``]
 crl [`\textcolor{violet}{start.p}`] : (start to O)    *** rule label changed  
                < O : Coord | proposals: P ; PS, waiting: empty,
                              cohorts: OS, results: RS >
            =>  < O : Coord | waiting: OS, results: insert(P,true,RS) >
                `\textcolor{violet}{attachLabel('start.p, (propagate P from O to OS))}`
                if not exist(RS,P) .

  *** auxiliary function that attaches the rule label
  op attachLabel : Qid Msgs -> LabelMsgs .
  eq attachLabel(L,MSG MSGS) = [MSG,L] attachLabel(L,MSGS) .
  eq attachLabel(L,null) = null .
\end{lstlisting}

\paragraph{Eagerness of Object-Triggered Rules.}
A user-provided system model may include rules that are not triggered by messages;
we refer to them as \emph{object-triggered rules}.
Such rules must be applied immediately once their enabling conditions become true; otherwise, nondeterministic choices could arise in firing rules. 
To enforce this semantics, the preprocessing step  generates an \texttt{eagerEnable} equation for every object-triggered rule.
This equation evaluates to \texttt{true} whenever the left-hand side of that rule matches the current configuration.

The following example illustrates this construction using the rule \texttt{[decision]}:

\begin{lstlisting}[language=maude,escapechar=~, escapeinside=``]
  rl [`\textcolor{violet}{decision.p}`] : < O : Coord | proposals: P ; PS, waiting: empty, 
                                cohorts: OS, results: (RS, P |-> V) >    
        =>  < O : Coord | proposals: PS >
        `\textcolor{violet}{attachLabel('decision.p, (start to O) (propagate decision(P,V) from O to OS))}` .

  *** eager enabling for object-triggered rule
  eq `\textcolor{violet}{eagerEnabled(< O : Coord | proposals: P ; PS, waiting: empty,}` 
                            `\textcolor{violet}{cohorts: OS, results: (RS, P |-> V) > OBJS) = true .}`
\end{lstlisting}

In addition, the scheduler's ``tick'' equation is strengthened with a guard ensuring that no object-triggered rule is eagerly enabled, as shown  below. 
This prevents the scheduler from releasing messages while an object-triggered rule is ready to fire,
which  is essential for preserving  AND   (i.e., Theorem~\ref{theorem:and-comp}).

\begin{lstlisting}[language=maude,escapechar=~, escapeinside=``]
  *** guarded tick: apply only when no object-triggered rule is eagerly enabled
 ceq tick(< sch : Scheduler | clock: GT, msgQueue: [GT',MSG,L] ; MS >  OBJS) 
   = < sch : Scheduler | clock: GT', msgQueue: MS >  OBJS  [GT',MSG,L] .
     `\textcolor{violet}{if not eagerEnable(OBJS)}` .
\end{lstlisting}

\inlsec{Model Composition}
In this step, \ourtool composes the user-provided system model with the fault injector.
Users specify which fault behaviors to inject, together with their configuration parameters such as target objects, source rule labels, probabilities, and timing,
through a dedicated interface module.

The following module illustrates how users can configure the faults \emph{message loss} and \emph{network partition}.

\begin{lstlisting}[language=maude,escapechar=~, escapeinside=``]
mod 2PC-FAULT-CONFIG is  
  inc FAULT-INJECTOR + INIT-2PC .

  *** injected fault behaviors
  eq injectedFaultBehaviors = msgloss ; part-time ; 
                              part-drop ; recover-time .

  *** configuring message loss
  eq msgLossRate = 0.3 .
  eq msgLossRules = 'decision .
  eq msgLossReceivers = ch2 .

  *** configuring network partition
  eq partitionOccurTime = 5.0 .
  eq partitionDuration = 20.0 .
  eq partitionAllNodes = (c ch1 ch2) .
  eq partitions = [c ch1 | ch2] .
endm
\end{lstlisting}
Note that  we choose to drop messages at the \texttt{[decision]} rule (line 10), 
i.e., the coordinator's decision message to a cohort.
This prevents blocking in this simplified version of 2PC and keeps the example focused on illustrating \ourtool.
%In our case study, however, we consider the full integration of 2PC with CTP to address 2PC's inherent blocking issues.

The module \texttt{INIT-2PC-FAULT} then imports this interface module, together with the module \texttt{2PC-FAULT}, which composes the system module \texttt{2PC} with the fault-injector module \texttt{FAULT-INJECTOR} (Section~\ref{subsec:fault-injector}).\footnote{The module \texttt{FAULT-INJECTOR} imports the modules \texttt{SCHEDULER} and \texttt{CONTROLLER}; the latter further imports all relevant fault handlers, such as \texttt{MSG-LOSS} and \texttt{PARTITION}.}

\begin{lstlisting}[language=maude,escapechar=~, escapeinside=``]
mod INIT-2PC-FAULT is 
  inc 2PC-FAULT-CONFIG + 2PC-FAULT .

  op initStateFault : -> Config .
  eq initStateFault = initState  *** defined in the module INIT-2PC
        < sch : Scheduler | clock: 0.0, msgQueue: nil >
        < ctrl : Controller | fbhvs: injectedFaultBehaviors,
                priority: (msgloss |-> 2, part-time |-> 1,
                           part-drop |-> 2, recover-time |-> 1) >
        < ml : MsgLoss | lossRate: msgLossRate, lossRules: msgLossRules,                          lossReceivers: msgLossReceivers >
        < pt : Partition | status: healthy, allNodes: partitionAllNodes, 
                           parts: partitions, 
                           occurTime: partitionOccurTime, 
                           duration: partitionDuration > .
endm
\end{lstlisting}

\inlsec{Model Transformation}
This step equips the composed system model with a monitor object
\texttt{< m : Monitor | events: $ES_{1}$@$T_1$ ; ... ; $ES_n$@$T_n$ >},
which records the set of user-specified events (e.g., $ES_{1}$) specified by the user together with their corresponding global timestamps (e.g., $T_1$) at runtime.
We realize this monitoring mechanism by extending prior work~\cite{DBLP:journals/pacmpl/LiuMOZB22} to support recording multiple events occurring at the same timestamp.

%To support event logging, we introduce event-related sorts and the class \texttt{Monitor}.
%Unlike the event mechanism in prior work~\cite{DBLP:journals/pacmpl/LiuMOZB22}, which allows recording only a \emph{single} event per timestamp, our design generalizes events to the sort \texttt{Events}.
%This enables multiple events to be associated with the same timestamp, reflecting richer runtime observations.
\begin{lstlisting}[language=maude]
  sorts Event Events TimedEvent TimedEvents .    
  subsort Event < Events .    subsort TimedEvent < TimedEvents .
  
  op empty : -> Events [ctor] .
  op __ : Events Events -> Events [ctor assoc comm id: empty] .
  op empty : -> TimedEvents [ctor] .
  op _;_ : TimedEvents TimedEvents -> TimedEvents [ctor assoc id: empty] .
  op _@_ : Events Float -> TimedEvent [ctor] 

  class Monitor | events: TimedEvents .

  *** user interface 
  sorts Pair EventMap .    subsort Pair < EventMap .
  
  op [_,_] : Qid Events -> Pair [ctor] .
  op none : -> EventMap [ctor] .
  op _;;_ : EventMap EventMap -> EventMap [ctor comm assoc id: none] .  
  op eventMap : -> EventMap .
\end{lstlisting}

%\paragraph{Rule transformation for event logging.}
\ourtool instruments the rules in the composed system model based on the user-specified events and their associated rule labels.
All other rules remain unchanged.
The following module illustrates how users specify the events of interest.

\begin{lstlisting}[language=maude,escapechar=~, escapeinside=``]
mod EVENT-2PC is 
  inc EVENTS + 2PC .
  
  ops propose finish : Proposal -> Event . 
  eq eventMap = ['start, propose(P)] ;; ['decision, finish(P)] [nonexec] .
endm
\end{lstlisting}

The corresponding transformed rule then logs the \texttt{propose(P)} event at time \texttt{GT}, marking the point at which \texttt{P} starts.
Note that the monitor and scheduler objects are  added automatically by \ourtool,
 and the monitor is updated on the right-hand side of the rule
 to record this timestamped event.

\begin{lstlisting}[language=maude,escapechar=~, escapeinside=``]
  crl [`\textcolor{violet}{start.p.m}`] : (start to O)
        < O : Coord | proposals: P ; PS, waiting: empty,
                      cohorts: OS, results: RS >
        `\textcolor{violet}{< m : Monitor | events: TES >  < sch : Scheduler | clock: GT >}`
     => < O : Coord | waiting: OS, results: insert(P,true,RS) >
                attachLabel('start.p.m, (propagate P from O to OS))
        `\textcolor{violet}{< m : Monitor | events: TES ; (propose(P) @ GT) >}`  `\textcolor{violet}{< sch : Scheduler | >}`
     if not exist(RS,P) . 
\end{lstlisting}

%\paragraph{Initial-state augmentation.}
Finally, the transformation  adds the monitor object 
\texttt{< m : Monitor | events: empty >}
to the initial state.

%\subsection{Preserving the AND Property} \label{app:preserve-and}
%The resulting system—obtained by applying both model composition and model transformation—preserves the operational behavior of the composed model and continues to satisfy the AND property, see Theorem~\ref{theorem:and-comp-moni}, with the proof given in Appendix~\ref{appendix:proof}.

%\begin{theorem}  \label{theorem:and-comp-moni}
%The system model composed with fault injection and equipped with the monitor guarantees AND.
%\end{theorem}

\inlsec{Quantitative Analysis}
\ourtool supports two modes of quantitative analysis:
pure simulation with Maude's built-in simulator and statistical model checking (SMC) with  PVeStA~\cite{pvesta}.
In what follows, we focus on the latter.

To evaluate performance properties such as average latency, the user provides a QuaTEx expression of the desired metric:

\begin{lstlisting}[language=maude,numbers=none]
    AvgLatency() = { s.rval(0) } ;
    eval E[ # AvgLatency() ] ; 
\end{lstlisting}
where \texttt{rval(0)} invokes the Maude function \texttt{val(0,C)} that  corresponds to the function \texttt{avglatency}.
This function returns the average proposal latency by dividing the total accumulated latency by the number of completed proposals (lines 10--11).
The relevant definitions are provided in the \texttt{ANALYSIS-2PC} module below.
Note that performance properties in general and average latency in particular are defined over the events recorded in the monitor (line 8)
% if add md in ANALYSIS-2PC, the following rows
%  *** set normal msg delay for scheduler
%  eq md = lognormal(0.0, 1.0, rand) .
\begin{lstlisting}[language=maude]
mod ANALYSIS-2PC is 
  ...

  vars T T' : Float .  var P : Proposal .  vars E E' : Events .
  vars TES TES2 TES3 : TimedEvents . 
  
  op val : Nat Config -> Float .
  eq val(0, < m : Monitor | events: TES > C) = avglatency(TES) .

  op avglatency : -> TimedEvents -> Float .   
  eq avglatency(TES) = totalLatency(TES) / numberOfProposal(TES) . 
     
  op totalLatency : TimedEvents -> Float .
  eq totalLatency(TES ; (E propose(P) @ T) ; TES2 ; 
                  (E' finish(P) @ T') ; TES3) 
   = T' - T + totalLatency(TES ; TES2 ; TES3) .
  eq totalLatency(TES) = 0.0 [owise] .
     
  op numberOfProposal : TimedEvents -> Float .
  eq numberOfProposal(TES ; (E propose(P) @ T) ; TES2 ; 
                      (E' finish(P) @ T') ; TES3) 
   = 1.0 + numberOfProposal(TES ; TES2 ; TES3) .
  eq numberOfProposal(TES) = 0.0 [owise] .
endm
\end{lstlisting}

Once these steps are completed, the user only needs to specify the standard SMC parameters (e.g., the confidence level and error margin) to carry out the  analysis, as we will see next.

%We use PVeStA to quantitatively analyze the \texttt{ANALYSIS-2PC} module with the above QuaTEx expression.
%Although earlier examples demonstrated how our framework can inject both \texttt{msgloss} and \texttt{partition} faults, the final quantitative experiment is conducted only under the message-loss fault because the 2PC running example is highly sensitive to disruptions: even mild combinations of faults (e.g., message loss together with network partition) easily prevent proposals from completing, resulting in executions where no meaningful latency can be measured. Injecting message loss alone already stresses the protocol while still allowing progress, which makes it sufficient for evaluating latency.

%Under the \texttt{msgloss} fault, running the analysis with PVeStA yields an average proposal latency of approximately $1.1$ time units (due to the probabilistic nature of the analysis, different runs may exhibit small numerical variations).

\subsection{Running the Tool}
We provide a command-line interface that automates \ourtool's entire workflow.
% The tool is invoked through the script, whose usage patterns are illustrated below:
% \begin{lstlisting}[language=bash]
% sh run.sh [--pvesta serverlist analysis-model quatex confidence threshold] protocol init fault-config [event]
% sh run.sh 2pc.maude init-2pc.maude 2pc-fault-config.maude
% sh run.sh --pvesta -l serverlist -m analysis-2pc.maude -f latency.quatex -a 0.01 -d 0.01 2pc.maude init-2pc.maude 2pc-fault-config.maude event-2pc.maude
% \end{lstlisting}

\paragraph{Command Pattern.}
\ourtool is invoked via the following general command pattern:
\begin{lstlisting}[language=bash,numbers=none]
sh run.sh [--pvesta serverlist analysis-model quatex confidence threshold] protocol init fault [event]
\end{lstlisting}

The arguments are interpreted as follows:
\begin{itemize}
    \item \texttt{protocol} and \texttt{init} are the user-provided system model and its initial-state module, e.g., \texttt{2PC} and \texttt{INIT-2PC} in our running example.
    \item \texttt{fault} specifies the fault-injection interface module, e.g., \texttt{2PC-FAULT-CONFIG}.
    \item \texttt{event} (optional) specifies the events to be monitored during the model-transformation step, e.g., \texttt{EVENT-2PC}.
\end{itemize}

The optional \texttt{pvesta} flag enables  analysis via PVeStA.
In this mode:

\begin{itemize}
\item \texttt{serverlist} lists the servers used to run simulations,
\item \texttt{analysis-model} denotes the analysis module (e.g., \texttt{ANALYSIS-2PC}),
\item \texttt{quatex} denotes the QuaTEx formula file,
\item \texttt{confidence} sets the confidence level (e.g., 0.01), and
\item \texttt{threshold} sets the error margin (e.g., 0.01).
\end{itemize}

%The optional \texttt{--pvesta} triggers quantitative analysis via PVeStA, where
%the \texttt{serverlist} contains servers for runing the simulations, the \texttt{analysis-model} denotes the analysis module (e.g.\ \texttt{ANALYSIS-2PC}), the \texttt{quatex} indicates the QuaTEx formula file, the \texttt{confidence(alpha)} is the confidence level (e.g.\ 0.01), and the \texttt{threshold(delta)} is the threshold level (e.g.\ 0.01).

\paragraph{Preprocessing and  Composition Only.}
The following invocation runs only the \emph{model preprocessing} and \emph{model composition} steps:

\begin{lstlisting}[language=bash,numbers=none]
sh run.sh 2pc.maude init-2pc.maude 2pc-fault-config.maude
\end{lstlisting}

This command produces the composed model in which the user-provided system model and the fault injector are combined with the configured faults.
No monitor is added, and no quantitative analysis is performed.

\paragraph{End-to-End Quantitative Analysis.}
The following command executes the entire workflow, including model preprocessing, model composition, model transformation, and finally  quantitative analysis with PVeStA:

\begin{lstlisting}[language=bash,numbers=none]
sh run.sh --pvesta -l serverlist -m analysis-2pc.maude -f latency.quatex -a 0.01 -d 0.01 2pc.maude init-2pc.maude 2pc-fault-config.maude event-2pc.maude
\end{lstlisting}

This invocation produces the fully instrumented model, including the monitor,  and automatically calls PVeStA to evaluate the QuaTEx formula written in \texttt{latency.quatex} using  the provided analysis parameters.

\paragraph{Analysis Result.}
Figure~\ref{fig:tool-run} shows a screenshot of \ourtool running under the full command.
The model composition and model transformation complete instantly.
Moreover, 600 simulations finish in approximately 2.27 seconds and yield an expected average latency of 1.13 (time units), using 85.578~MB of memory.

%Below we show a screenshot of our tool running under the full command, demonstrating the complete workflow including quantitative evaluation. Figure \ref{fig:tool-run} shows the analysis result returned by \ourtool. The model composition and transformation finished quickly; 600 simulations finished in about 2.27 seconds and obtained the expected average latency 1.13 with 85.578 MB memory usage.

\begin{figure}
  \centering
  \includegraphics[width=0.8\linewidth]{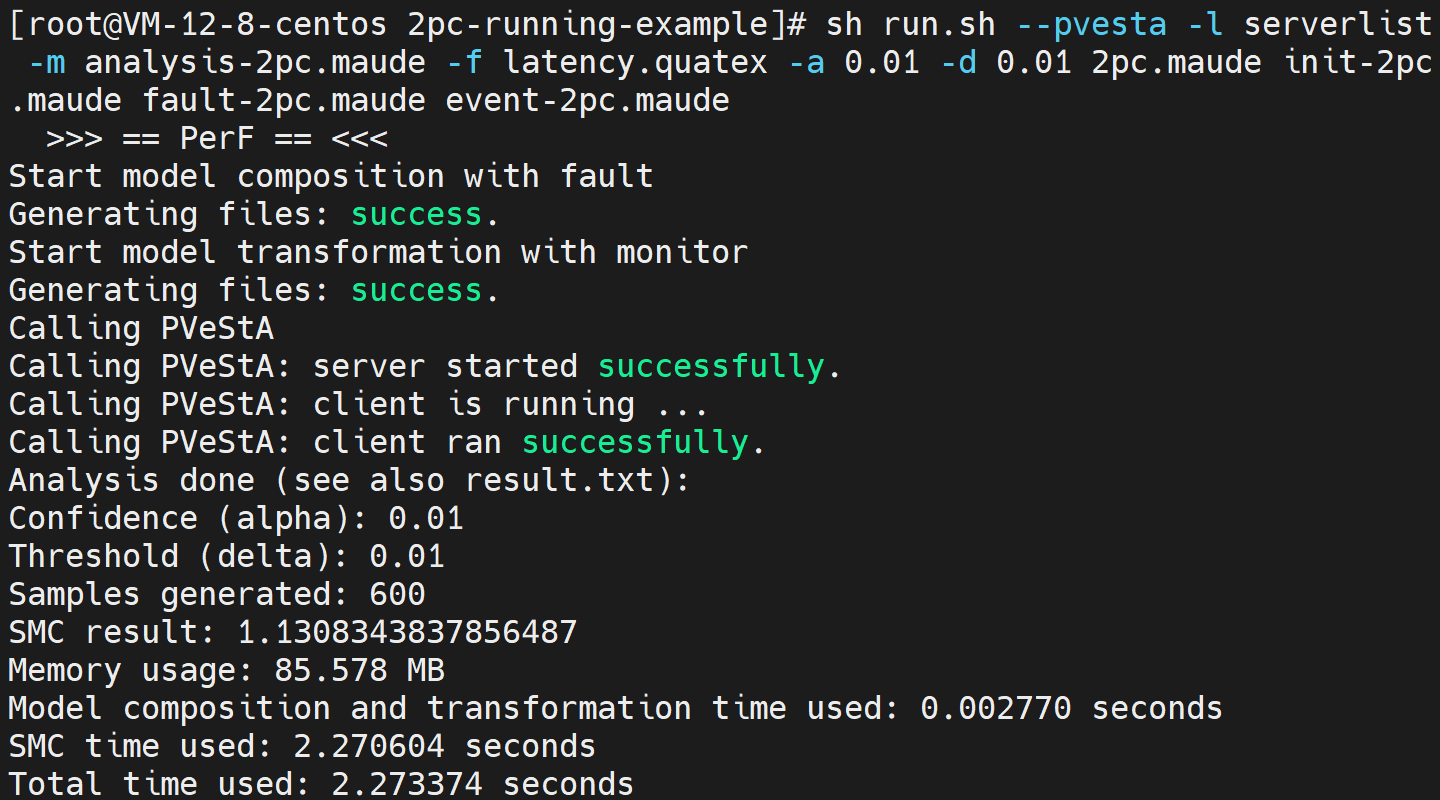}
  \captionsetup{skip=6pt}
  \caption{An example run of \ourtool with the SMC result.}
  \label{fig:tool-run}
\end{figure}

%% file: sections/app-proof.tex
\section{Proofs}
\label{appendix:proof}

This section presents proofs of the theorems that underpin \ourtool's statistical model checking analysis.
Specifically, we prove that both the model composition (with fault injection) 
and 
the model transformation (with monitoring)
satisfy 
the \emph{absence of nondeterminism} (AND) property.

\begin{definition}[AND~\cite{DBLP:journals/pacmpl/LiuMOZB22}]
With probability 1, for any reachable state
	there is at most one rewrite rule
	applicable, with a unique matching substitution. % $\theta$;
%	that can be applied to reach a next state, i.e.,
%i.e., two different rules, or the same rule but with different
%matching substitutions, can never be applied to a reachable state.
\end{definition}

\subsection{Proof of Theorem \ref{theorem:and-comp}}

Theorem \ref{theorem:and-comp} (Section~\ref{subsec:fault-injector}) establishes that the system model composed with the fault injector guarantees  the AND property.  
The proof proceeds by induction on the number of rewrite steps from the initial state.

\paragraph{Assumptions.}
The user-provided model is untimed, nondeterministic   with  two kinds of rules:
a rule that consumes a message, called a \emph{message-triggered rule},
and a rule that fires without consuming a message, called an \emph{object-triggered rule}.
Following prior work~\cite{DBLP:journals/pacmpl/LiuMOZB22},
we make the following assumptions:

\begin{enumerate}[label=(\arabic*)]
\item If a message is delivered to an object, it processes the message by applying a \emph{unique} message-triggered rule with a \emph{unique} substitution.

\item  If an object-triggered rule is enabled, the object performs its local transition by applying a \emph{unique} object-triggered rule with a \emph{unique} substitution.

\item The initial state contains exactly one  object that is enabled either by a message-triggered rule or an object-triggered rule.

%\item In any state reachable from the initial one, if a message is delivered to an object that is not enabled by any object-triggered rule, then there exists a message-triggered rule applicable to that object.

%\item Any sequence in which an object continuously applies object-triggered rules is finite.

%\item If an object is enabled by an object-triggered rule, it is enabled to perform its local transition by applying a \emph{unique} object-triggered rule with a \emph{unique} substitution.
%\item The initial configuration should only have one object, either enabled by a message-triggered rule or an object-triggered rule.
%\item In any configuration reachable from the initial one, if a message is delivered to an object which is not enabled by an object-triggered rule, there exists a message-triggered rule to apply.
%\item The steps of an object continuously applying object-triggered rules is finite.
\end{enumerate}

Assumptions (1) and (2) 
ensure local determinism: whenever an object is enabled, either by a message or by itself, there is exactly one applicable rule with a unique substitution.
Note that this does not rule out nondeterministic choice between the local transitions currently possible for an object.
Assumption~(3) imposes the same form of local determinism on the initial state by requiring that exactly one object is enabled at the start.

\begin{proof}
%We prove the claim by induction on the execution of the rewrite.

%\subsubsection{Configuration Types.}
We begin by categorizing the elements that may appear in a top-level system configuration:

\begin{itemize}
    \item \textit{\textit{qobjs}}: quiescent  objects that are not currently enabled by any \emph{object-triggered rule};
    \item \textit{\textit{eobj}}: an enabled  object for which an \emph{object-triggered rule} is  applicable;
    \item \textit{imsg}: 
    an incoming message that has been released by the fault injector and is ready to be consumed by an object (e.g.,  a term like \texttt{MSG} that appears in the configuration outside the objects);
   
    \item \textit{omsg}: 
    outgoing message(s) produced by an object and already preprocessed, waiting to be intercepted by the scheduler (a term of the form \texttt{[MSG,L]}, which is immediately captured by the scheduler's insertion equation);

    \item \textit{fmsg}: internal messages within the fault injector, such as \texttt{[GT,MSG,L]} and \texttt{auth([GT,MSG,L],B)}.
\end{itemize}

Based on these elements, we classify top-level system configurations into the following five mutually exclusive types:

\begin{enumerate}[label=(\arabic*)]
    \item \textit{qobjs eobj}: 
    an object is enabled by an object-triggered rule (i.e., an object-triggered  rule in the user-provided system model is applicable).

    \item \textit{qobjs imsg}: 
    there exists an incoming message ready for consumption.

\item \textit{qobjs}:
all objects are quiescent, with no available messages and no enabled objects.

    \item    \textit{qobjs omsg},  \textit{qobjs omsg eobj}, or \textit{qobjs omsg fmsg}:
 one or more outgoing messages exist;

    \item \textit{qobjs fmsg}: 
    an internal message within the fault injector is present  (e.g., for message inspection or authorization of fault behaviors).
    
\end{enumerate}

\noindent
Note that
Types (4) and (5) do not represent persistent rewriting states;
instead,
they are transient configurations resolved by equations, not by the rewrite rules whose nondeterministic applicability is under consideration. 
The argument below makes this distinction explicit.

We are now ready to begin the proof.
Let $n \ge 0$. For any configuration $C$ reachable from the composed initial state by $n$ rewrite steps, 
we need to show that, with probability 1, 
 $C$ has one of the forms (1)--(5) listed above, \emph{and}
at most one rewrite rule, with a unique matching substitution, is applicable to  $C$. 
We  prove by induction 
on the number of rewrite steps from the initial state.

%the number $n$ of rewrite steps from the composed initial state that every configuration reachable in $n$ steps (i) is of one of the types above and (ii) applies at most one rewrite rule with a unique matching substitution with probability 1.

\inlsec{Base Case ($n = 0$)}
The composed initial state is obtained by adding the fault-injection objects, i.e., the scheduler, the controller,  and the handlers, to the original initial state \texttt{initState}.
By Assumption~(3), \texttt{initState} contains exactly one  object, enabled either by an object-triggered rule or by a message-triggered rule.
Therefore, the composed initial state is of Type~(1) or Type~(2).
Moreover, by Assumptions~(1) and~(2), exactly one rewrite rule with a unique matching substitution is enabled in this configuration.

\inlsec{Induction Step}
Assume that the AND property holds for all configurations reachable in $n$ rewrite steps.
Let  $C$  be a configuration reachable in  $n$ steps, and consider a single rewrite step $C\longrightarrow C'$. 
By the induction hypothesis,   $C$ must have one of the five forms listed above. 
We now examine each form of  $C$ and establish that
(i) in all cases exactly one rewrite rule is applicable to $C$,
and (ii)
 the application of this rule yields a configuration $C'$ that again belongs to one of the five forms.

\paragraph{Type (1): \textit{qobjs eobj}.}
An object in the system model 
 is enabled for exactly one object-triggered rewrite rule (with a unique matching substitution).
By Assumption~(2) and by the mechanism of eagerly applying object-triggered rules (see Appendix~\ref{appendix:tool}), this rule is applied immediately, without releasing any incoming message to the actor.
The application of the rule results in one of the following outcomes, all determined by the semantics of the system model:

\begin{itemize}
    \item an outgoing message is produced (Type~(4)), and the  object may or may not remain enabled;
    \item no outgoing message is produced, but the  object \textit{eobj} remains enabled (Type (1)); or
    \item  the  object \textit{eobj} becomes quiescent and no message is produced (Type~(3)).
\end{itemize}
In all cases, exactly one rewrite rule is applied in this step, and the configuration $C'$ again belongs to one of the five types.

\paragraph{Type (2): \textit{qobjs imsg}.}
A message has been released into the configuration and is ready to be consumed by its destination object.
By Assumption~(1), the destination object has at most one enabled rewrite rule (with a unique matching substitution).
Applying this rule yields one of the following outcomes:

\begin{itemize}
\item the message is consumed and one or more outgoing messages are generated (Type~(4)), with the destination object either remaining enabled or becoming quiescent;

\item the message is consumed and the destination object becomes enabled (Type (1)); or

\item the message is consumed without producing any outgoing message, and the destination object remains quiescent after performing only internal updates (Type~(3)).

\end{itemize}

\paragraph{Type (3): \textit{qobjs}.}
No  object is enabled, and  no incoming message is pending for any object.

\begin{itemize}
    \item 
 If the scheduler holds no in-transit messages, then the system is quiescent and no rewrite rule is applicable from this point onward.
The configuration is terminal (i.e., no more rewrite steps), and the AND property trivially holds.

    \item 
If the scheduler does hold in-transit scheduled messages, the scheduler proceeds by applying the ``tick'' equation, which advances the global clock and emits an internal message \texttt{[T,MSG,L]} within the fault injector, i.e., an \textit{fmsg}.  
Since the scheduler's clock tick is realized by an equation, it does not count as a rule choice for AND.  
The configuration is transformed into Type~(5), after which the controller is invoked.  
Hence, no nondeterministic rewrite rule is introduced at this stage.

\end{itemize}

\paragraph{Type (4): \textit{qobjs omsg}, \textit{qobjs omsg eobj}, or \textit{qobjs omsg fmsg}.}
An outgoing message has been produced by a rewrite rule in the system model and appears as an \textit{omsg}.
In our framework, such outgoing messages are immediately processed by equations that attach the source rule label and insert the message into the scheduler's queue.
This is deterministic.

Consequently, a configuration of Type~(4) is transient and never serves as a rewriting target: 
it is immediately reduced by equations to a configuration of Type~(3), or to Type~(1) if an \textit{eobj} is also present, or to Type~(5) if an \textit{fmsg} is generated.
Therefore, Type~(4) does not introduce any source of  nondeterminism.

\paragraph{Type (5): \textit{qobjs fmsg}.}
This category encompasses the internal inspection and authorization messages used within the fault injector.
All transitions involving such messages are realized by equations associated with the controller and handlers. 
Therefore, configurations of Type~(5) do not introduce nondeterministic rule choices.
We now analyze the two subcases of \textit{fmsg} in detail.

\begin{itemize}
    \item \emph{\texttt{[GT,MSG,L]} is delivered to the controller.}
The controller inspects the message to determine whether any fault condition holds.
Each predicate evaluation depends only on the handler objects' attributes and on the current random sample; hence, every predicate outcome is uniquely determined.
The controller then issues at most one authorization (by selecting the behavior of the highest priority) or releases the message back to the configuration.
Both the issuance of an authorization and the release are performed by equations.

Therefore,  the controller deterministically produces one subsequent fault-injector configuration, either a Type~(5) configuration carrying an authorization or a Type~(2) configuration after a release.
     
    \item \emph{\texttt{auth([GT,MSG,L],B)} is delivered to a handler.}
    An authorization always targets a specific handler object that deterministically executes the associated behavior.
Handler behaviors fall into one of the following four categories:
    
    \begin{itemize}[label=$\bullet$]
\item \emph{Modify and recheck the message:}
the handler produces a modified internal message \texttt{[GT,MSG',L]} and returns it to the controller (yielding a Type~(5) configuration with a new \textit{fmsg});

\item \emph{Reschedule the message:}  
the handler produces an outgoing message that is inserted into the scheduler's queue (Type~(4));

\item \emph{Drop the message:}  
the handler removes the message, resulting in a Type (3) configuration;

\item \emph{Environment update:}  
the handler updates the network or node status and then returns the message to the controller (Type~(5)).

    \end{itemize}
Each authorization matches exactly one handler equation and is consumed by that equation.
Hence, the handler's operation is uniquely determined, and no  nondeterminism arises.
\end{itemize}

%Because both controller and handler steps are realized by equations that produce a deterministic outcome, type (5) configurations do not present multiple concurrent rewrite rule choices.

We have shown that each configuration $C$ has at most one applicable rewrite rule and that its successor $C'$ after one rewrite step again falls into one of the five forms.
Hence,  AND is preserved inductively, and the composed model with fault injection satisfies  this property. \qed

\end{proof}

\subsection{AND Preserved by Monitoring}

We now prove that the model transformation, which equips the composed model with a monitoring mechanism, preserves the AND property.

\begin{theorem}  \label{theorem:and-comp-moni}
The composed model equipped with the monitor guarantees AND.
\end{theorem}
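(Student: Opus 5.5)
The plan is to reduce Theorem~\ref{theorem:and-comp-moni} to Theorem~\ref{theorem:and-comp} via a simulation argument rather than redoing the case analysis from scratch. Let $\mathcal{R}$ be the composed model and $\mathcal{R}_m$ the transformed one. I would define a projection $\pi$ that erases the monitor object \texttt{< m : Monitor | events: TES >} from a configuration; this is well defined because the transformation adds exactly one monitor object to the initial state and no rule or equation creates or destroys monitors. I would then establish a one-to-one correspondence between rules: every rule $l$ of $\mathcal{R}$ either is left unchanged (if its label does not occur in \texttt{eventMap}) or is replaced by a single instrumented rule $l.m$, as with \texttt{[start.p.m]}.

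The first key step is a lemma stating that, for any configuration $C$ of $\mathcal{R}_m$ containing a monitor and the scheduler, rule $l.m$ applies to $C$ with substitution $\theta'$ iff $l$ applies to $\pi(C)$ with substitution $\theta$, where $\theta'$ extends $\theta$ only by the bindings \texttt{TES} $\mapsto$ the monitor's event list and \texttt{GT} $\mapsto$ the scheduler clock. To prove it, I would argue in three parts. The extra left-hand-side pieces are two objects with fixed identifiers \texttt{m} and \texttt{sch}. Each occurs exactly once in any reachable configuration, by an invariant proved by induction along rewrites and equational steps. Their attribute patterns are plain variables. Hence the extra bindings are forced and unique, and the condition \texttt{if} $\mathit{cond}$ is unchanged because it does not mention \texttt{TES} or \texttt{GT}. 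On the right-hand side, the instrumented rule produces $\pi$-image equal to the original right-hand side, since the scheduler is returned unmodified and the monitor only gains the appended event; moreover \texttt{attachLabel} is called with label $l.m$ in place of $l$.

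The second step checks that the equational part commutes with $\pi$. The scheduler, controller, handler, \texttt{eagerEnabled}, and guarded \texttt{tick} equations never mention the monitor. The \texttt{tick} equation matches \texttt{OBJS}, which now also absorbs the monitor, and \texttt{eagerEnable} is unaffected by an extra non-system object. Consequently, normal forms satisfy $\pi(C{\downarrow}) = \pi(C){\downarrow}$, with random samples drawn identically. The relabeling $l \mapsto l.m$ must be applied consistently to the fault configuration (e.g.\ \texttt{lossRules}); I would state this as a property of the tool's transformation, so that the controller's \texttt{isSatisfied} evaluations coincide.

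Combining the two steps, I would prove by induction on $n$ that every configuration reachable in $n$ steps in $\mathcal{R}_m$ is $\pi$-related to one reachable in $n$ steps in $\mathcal{R}$, along executions that make the same probabilistic choices. The set of applicable (rule, substitution) pairs at $C$ then injects into the set applicable at $\pi(C)$, and by Theorem~\ref{theorem:and-comp} the latter has size at most one with probability~1. This yields AND for $\mathcal{R}_m$. I expect the main obstacle to be the uniqueness of the matching substitution in the first lemma. Matching modulo associativity, commutativity and identity of the configuration union could in principle split the configuration in several ways; the argument therefore hinges on the invariant that \texttt{m} and \texttt{sch} are unique. I would prove that invariant first, together with the observation that the event-list variable \texttt{TES} is matched whole rather than decomposed under \texttt{\_;\_}.
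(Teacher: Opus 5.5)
Your proposal is correct and follows essentially the same route as the paper: erase the monitor via a projection, observe that every rule is either unchanged or an instrumented copy whose enabledness and matching depend only on the non-monitor part of the configuration, and transfer AND from Theorem~\ref{theorem:and-comp}. The extra details you supply make explicit what the paper simply asserts: uniqueness of the \texttt{m} and \texttt{sch} bindings, commutation of the equations with the projection, coupling of random draws, and the changed label passed to \texttt{attachLabel}, which the paper's proof glosses over.
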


\begin{proof}

Let $\mathcal{R}_{C}$ be the rewrite theory obtained after model composition.
By Theorem~\ref{theorem:and-comp}, $\mathcal{R}_{C}$ satisfies the AND property.
Let $\mathcal{R}_{C+M}$  denote the rewrite theory obtained after applying the model transformation. 
This transformation introduces two changes: (i) it adds a monitor object to the configuration, and (ii) it transforms only those rules that have user-specified events.

%\paragraph{Adding the Monitor.}
The transformation augments the initial configuration with a monitor object (see Appendix~\ref{appendix:tool}), which is 
 updated exclusively by the transformed rules corresponding to user-specified events. 
%\paragraph{Transformed Rules.}
For each rule \texttt{[l]} associated with a user-specified event, the transformation produces a corresponding  rule \texttt{[l.m]}.
This rule preserves the original rule's semantics, and its only modification is an additional update to the monitor object's attributes, which records the events. 
Rules without associated events remain unchanged.
Hence, no new rewrite rules are introduced, and the monitor does not participate in any rule-enabling conditions.

%\paragraph{Preserving Enabledness and Match Uniqueness.}
Let $C$ be any reachable configuration of $\mathcal{R}_{C+M}$, and let $h(C)$
be its projection obtained by removing the monitor object.
Then $h(C)$ is a reachable configuration of $\mathcal{R}_{C}$. 
Every rule in $\mathcal{R}_{C+M}$ is either
an unchanged rule from $\mathcal{R}_{C}$, or
a transformed rule whose applicability depends only on the non-monitor portion of the configuration.
Hence,
a rule of $\mathcal{R}_{C+M}$ is enabled in $C$ iff its underlying rule in $\mathcal{R}_{C}$ is enabled in $h(C)$.
Accordingly, the enabled-rule set of $C$ coincides with that of $h(C)$. 
As $\mathcal{R}_{C}$ satisfies AND, at most one rule is enabled in $h(C)$, and it has a unique matching substitution.
The same therefore holds in $C$: at most one (transformed or unchanged) rule is enabled, and it matches uniquely.
The additional monitor updates do not influence rule applicability or matching elsewhere in the configuration, 
thereby  introducing no nondeterminism.

We conclude that enabledness and unique matching are preserved by the model transformation, and therefore $\mathcal{R}_{C+M}$ also satisfies the AND property.  \qed
\end{proof}

%% file: sections/app-exp.tex
\section{Other Case Studies}
\label{appendix:exp}

In this section, we discuss the remaining four case studies: 
(i) Two-Phase Commit (2PC) integrated with the Cooperative Termination Protocol (CTP)~\cite{DBLP:books/aw/BernsteinHG87},
(ii) the Raft consensus protocol~\cite{raft}, 
(iii) the authoritative DNS server PowerDNS~\cite{powerdns},
and (iv) Cassandra's quorum-based consistency protocol~\cite{cassandra}.

\inlsec{2PC with CTP}
2PC is an atomic commitment protocol commonly used in distributed database systems. 
It consists of  the \textsc{prepare} phase and the \textsc{commit} phase, together ensuring the atomicity of transactions across multiple servers. 
To address its inherent blocking issues, 
CTP is often  integrated with 2PC in practical implementations.
For example, when a server $S$ 
has performed \textsc{prepare} for transaction $T$ but times out waiting for a \textsc{commit} (due to, e.g., message loss), $S$ can check $T$'s status on its sibling servers.\footnote{A server's siblings  are those servers 
associated with the same 2PC instance. For example, when a coordinator starts committing a transaction, it propagates \textsc{prepare} messages to three cohorts $S_1$, $S_2$, and $S_3$, then $S_1$ and $S_3$ are $S_2$'s siblings.}
In case a sibling $S'$ has received \textsc{commit} for $T$, then $S$ can also commit $T$. 

Figure~\ref{fig:exp}(a) shows the average latency
of 2PC with varying message loss rates. 
When combined with CTP, we observe a linear increase in the latency of 2PC as the number of undelivered \textsc{commit} messages rises.
The SMC estimation 
aligns closely with the CloudLab (Utah) evaluation. 
%with an \textit{a posteriori} scaling factor of 1 time unit $=$ 10ms.
 
\inlsec{Raft}
The past decade has seen widespread adoption of the Raft consensus algorithm across distributed systems requiring fault-tolerant consensus.
It is a leader-based algorithm in which the leader manages log replication. If the leader becomes disconnected or crashes, a new leader is elected through voting and then resolves any inconsistencies by forcing followers' logs to match its own.

We measure
the time Raft takes to detect and replace a crashed leader in a 
cluster of five servers, as shown in Figure~\ref{fig:exp}(b).
The CloudLab (Utah) deployment results closely follow the overall cumulative trend predicted by \ourtool: both exhibit a similar rise in  leader election latencies, although the deployment curve is slightly shifted to the right due to real-world network and system overheads.

\inlsec{PowerDNS}
The Domain Name System (DNS) is a hierarchical distributed system that translates human-readable domain names into IP addresses, enabling efficient Internet navigation. 
Authoritative DNS servers, such as PowerDNS (widely used in Europe), are essential components of this infrastructure, playing a critical role in name resolution.

A recently identified attack reported in~\cite{liu-sigcomm} is a slow DoS attack  in which an attacker keeps a query ``alive'' inside a resolver, consuming resources needed for legitimate traffic. 
Using \ourtool, 
we reproduce this attack in PowerDNS by automatically injecting delays 
to the responses of attacker-controlled
nameservers.
The SMC estimation is plotted  in  Figure~\ref{fig:exp}(c) alongside results from  a
 DNS testbed~\cite{liu-sigcomm}.
Overall, 
the total
query duration increases as additional delay is introduced by the malicious nameserver up to  0.6s,
and the behavior of actual resolvers closely matches our model.
In particular, 
introducing a delay of 0.6s 
can yield a cumulative end-to-end delay  of roughly 6s for resolving a single query.

\inlsec{Cassandra's Quorum-based Consistency}
 Apache Cassandra is a distributed key–value store that employs a quorum-based mechanism to balance data consistency and availability. A quorum specifies the number of data replicas that must acknowledge a read or write operation for it to be considered successful.
An alternative design~\cite{bobba2018survivability} was shown to achieve performance comparable to the original mechanism, while returning more   consistent  data in certain scenarios.
However, the analysis was performed under fault-free network conditions.

As shown in Figure~\ref{fig:exp}(d),
under network partitions the original design maintains more stable throughput, even when recovery is slower. 
The CloudLab (Clemson)
deployment evaluations further confirm the behaviors observed in our model-based analysis. 
Since network partitions are unavoidable in geo-distributed settings, these results provide developers with deeper insight into the performance–consistency trade-offs between the two mechanisms.